\def\ps@pprintTitle{%
 \let\@oddhead\@empty
 \let\@evenhead\@empty
 \def\@oddfoot{}%
 \let\@evenfoot\@oddfoot}
\newcommand{\nonl}{\renewcommand{\nl}{\let\nl\oldnl}}%
\theoremstyle{plain}
\newtheorem{theorem}{Theorem}
\newtheorem{proposition}{Proposition}
\theoremstyle{definition}
\newtheorem{example}{Example}
\newtheorem{assumption}{Assumption}
\newtheorem{definition}{Definition}
\theoremstyle{remark}
\newtheorem{remark}{Remark}
\newenvironment{pfof}[1]{\vspace{1ex}\noindent{\itshape Proof of #1:}\hspace{0.5em}}{\hfill\qed\vspace{1ex}}
\newcommand\xqed[1]{%
  \leavevmode\unskip\penalty9999 \hbox{}\nobreak\hfill
  \quad\hbox{#1}}
\newcommand\EXND{\xqed{$\Diamond$}}
\DeclareMathOperator{\col}{col}
\DeclareMathOperator{\im}{im}
\newcommand{\norm}[1]{\ensuremath{\left\| #1 \right\|}}
\newcommand{\calN}{\ensuremath{\mathcal{N}}}
\newcommand{\calI}{\ensuremath{\mathcal{I}}}
\newcommand{\calX}{\ensuremath{\mathcal{X}}}
\newcommand{\calV}{\ensuremath{\mathcal{V}}}
\newcommand{\calW}{\ensuremath{\mathcal{W}}}
\newcommand{\calO}{\ensuremath{\mathcal{O}}}
\newcommand{\ENC}[1]{\ensuremath{{\color{blue}\mathcal{E}}({#1})}}
\newcommand{\scrN}{\ensuremath{\mathscr{M}}}
\newcommand{\scrP}{\ensuremath{\mathscr{P}}}
\newcommand{\scrQ}{\ensuremath{\mathscr{Q}}}
\newcommand{\ONI}{\ensuremath{\overline{\calN}_i}}
\newcommand{\calP}{\ensuremath{\mathcal{P}}}
\newcommand{\R}{\ensuremath{\mathbb R}}
\newcommand{\Z}{\ensuremath{\mathbb Z}}
\newcommand{\Gen}{\fontfamily{phv}\selectfont Gen}
\newcommand{\Enc}{\fontfamily{phv}\selectfont Enc}
\newcommand{\Dec}{\fontfamily{phv}\selectfont Dec}
\newcommand{\View}{\text{\fontfamily{phv}\selectfont View}}
\newcommand{\Sim}{\text{\fontfamily{phv}\selectfont Sim}}
\newcommand{\PV}{\text{\fontfamily{phv}\selectfont Pv}}
\newcommand{\qj}{\ensuremath{q_j}}
\newcommand{\p}{\ensuremath{p_i}}
\newcommand{\pj}{\ensuremath{p_j}}
\newcommand{\bse}{\begin{subequations}}
\newcommand{\ese}{\end{subequations}}
\def\be{\begin{equation}}
\def\ee{\end{equation}}
\newcommand{\bbm}{\begin{bmatrix}}
\newcommand{\ebm}{\end{bmatrix}}
\newcommand{\sa}{\ensuremath{\boldsymbol{a}}}
\newcommand{\sm}{\ensuremath{\boldsymbol{m}}}
\newcommand{\Fa}{\ensuremath{\boldsymbol{\alpha}}}
\newcommand{\Fm}{\ensuremath{\boldsymbol{\kappa}}}
\newcommand{\ski}{\ensuremath{\text{sk}_i}}
\newcommand{\x}{\ensuremath{\boldsymbol{x}}}
\newcommand{\bld}[1]{\ensuremath{\boldsymbol{#1}}}
\newcommand{\blue}[1]{\textcolor{blue}{#1}}
\begin{document}

\begin{frontmatter}

\title{Private Computation of Polynomials over Networks}

\author[mymainaddress]{Teimour Hossienalizadeh\corref{mycorrespondingauthor}}
\cortext[mycorrespondingauthor]{Corresponding author.}
\ead{t.hosseinalizadeh@rug.nl}

\author[mysecondaryaddress]{Fatih Turkmen}
\ead{f.turkmen@rug.nl}

\author[mymainaddress]{Nima Monshizadeh}
\ead{n.monshizadeh@rug.nl}

\address[mymainaddress]{Engineering and Technology Institute, University of Groningen, The Netherlands}
\address[mysecondaryaddress]{Bernoulli Institute for Mathematics, Computer Science and Artificial Intelligence, University of Groningen,The Netherlands}

\begin{abstract}
This study concentrates on preserving privacy in a network of agents where each agent seeks to evaluate a general polynomial function over the private values of her immediate neighbors.
We provide an algorithm for the exact evaluation of such functions while preserving privacy of the involved agents.
The solution is based on a reformulation of polynomials and adoption of two cryptographic primitives: Paillier as a Partially Homomorphic Encryption scheme and multiplicative-additive secret sharing.
The provided algorithm is fully distributed, lightweight in communication, robust to dropout of agents, and can accommodate a wide class of functions.
Moreover, system theoretic and secure multi-party conditions guaranteeing the privacy preservation of an agent's private values against a set of colluding agents are established. 
The theoretical developments are complemented by numerical investigations illustrating the accuracy of the algorithm and the resulting computational cost. 
\end{abstract}

\begin{keyword}
Privacy\sep cryptography\sep polynomials\sep networked systems \sep multiagent systems
\end{keyword}

\end{frontmatter}

\section{Introduction}

Emerging distributed systems such as smart grids, intelligent transportation, and smart buildings provide better scalability, fault tolerance, and resource sharing compared to traditional centralized systems. 
  A distributed dynamical system rely on
  peer-to-peer data exchange between individual agents. 
  The agents wish to protect their data from being revealed since the data can contain sensitive information or can be leveraged for  disrupting the system(see the case for smart metering in \cite{van2019smart}).  Therefore preserving privacy of agents in distributed dynamical systems is of crucial concern.

\par To preserve privacy in dynamical systems, differential privacy is a popular approach that was introduced to control system for private filtering through \cite{le2013differentially}, applied to average consensus \cite{mo2016privacy, nozari2017differentially}, distributed optimization \cite{nozari2016differentially}, plug-and-play control \cite{kawano2021modular} and studied for its relation to input observability \cite{kawano2020design}.  
In general, however, it introduces a trade-off between privacy level and control performances, and also possible vulnerability of data disclosure through the least significant bit of the perturbed data \cite{mironov2012significance}.
System theory also provides solutions for preserving privacy in dynamical systems, see \cite{altafini2020system, sultangazin2020symmetries, monshizadeh2019plausible} in this context. Even though these solutions do not generally degrade the performance of controllers and are lightweight in computation, they are problem specific and their privacy guarantees are weaker compared to differential privacy based methods.

\par \blue{Recently, cryptography based methods have gained attraction in control systems as a new avenue for privacy problems.}
We can classify the literature in this area into two main categories: 
 The first one is the typical setup of an isolated system and a cloud 
 where the cloud evaluates a controller 
 using the encrypted data generated by the system.
Among the first known studies of encrypted control in this group, we can refer to \cite{Kogiso2015} where the privacy of controller parameters and system states are preserved by using RSA and ElGamal encryption schemes for static state feedback controllers. \blue{The heavy involvement of the system in the computation procedure is resolved in \cite{Farokhi2017a} by employing Paillier's scheme as a Partially Homomorphic Encryption(PHE) and extended to nonlinear state feedback in \cite{SchulzeDarup2020} using a framework of two non-colluding clouds combined with Paillier's scheme.}
Deployment of linear dynamic controllers over a cloud is investigated in \cite{Kim2016} by employing Fully HE(FHE). The essence of recursion in these controllers causes the finite-time life span problem, for which the necessity of integer coefficients in \cite{Cheon:2018}, and refreshment of the controller state in \cite{Murguia2020} are proposed as possible solutions. Outsourcing the calculation of computationally demanding controllers such as  the implicit model predictive controller to another party is also investigated in \cite{Alexandru2020a} where the privacy problem is resolved mainly by Paillier's scheme.

\par The second category, which this study also belongs to, is related to
 preserving privacy in multiagent systems \blue{where compared to the first case, the communication topology and the presence of different agents (parties) impose additional constraints on the problem.}
In this group, authors in \cite{Ruan2019} use Paillier and weight decomposition to preserve privacy in the first order consensus problems where the solution is extended to the second order case in \cite{fang2021secure}. Their proposed method is suitable when an agent's objective is to evaluate an affine function of her immediate neighbors and is restricted to consensus problems. The privacy preservation is further investigated in \cite{Hadjicostis2020} for the case of directed communication among the agents.
In the context of distributed optimization, authors in \cite{Lu2018} use a symmetric FHE scheme (SingleMod encryption) and a third party to preserve privacy of the involved agents interested in evaluating polynomial functions of private variables.
The existence of a central non-colluding third party poses this question that whether it is possible to solve the problem in a centralized instead of a distributed way.
In other words, an FHE scheme and the third party allow the designer of the optimization scheme to devise a centralized algorithm.
\par  The problem of privacy in cooperative linear controllers in networked systems is first considered in \cite{Darupcooperative} and later its issue of information revealing is resolved in \cite{Alexandru2019} by combining additive secret sharing with Paillier's scheme.
The problem is viewed more generally as a private weighted sum aggregation and discussed further in \cite{alexandru2021private}.   
We refer the interested readers to \cite{Darup_Alexandru} for an overview of the recent applications of cryptography in dynamical systems.
\par In this work, we consider the problem of privacy in the networked systems where each agent's goal is to evaluate a general polynomial of her neighboring agents' private values.
Our solution is based on two cryptographic primitives: the Paillier encryption technique and secret sharing.
Paillier cryptosystem is a public key PHE \cite{paillier} which allows us to evaluate the sum of two values of plaintext using their ciphertext and secret sharing enables us to distribute shares of a secret among agents in the network. 
The main contributions of this study are as follows:\footnote{Preliminary results of this work are presented in \cite{Hosseinalizadeh}.
Different from the conference article, this document presents distributed secret sharing using pseudorandom functions, considers all polynomial coefficients as private, investigates robustness to agent dropouts, provides a formal proof for Theorem \ref{the1}, analyzes privacy from a system theoretic perspective (see section \ref{sec:pri-ana}), and provides  motivating examples as well as a new case study.}

\par (i) The current work extends the class of computed functions from affine \cite{Darupcooperative, Alexandru2019, alexandru2021private}, which is customary in linear averaging protocols, to general polynomials, i.e. each agent's target function is a polynomial function of her neighbors' state variable. \blue{As such, the proposed protocol finds its way to networked control and optimization with \textit{nonlinear} coupling law. Instances of those include, but not limited to, game theoretic controllers \cite{yi2019operator}, general consensus on nonlinear functions \cite{cortes2008distributed}, collision avoidance  \cite{mylvaganam2017differential}, and optimal frequency controllers \cite{dorfler2017gather}. The extension to the polynomial case, particularly due to the products of the state variables, substantially complicates the problem and requires a careful analysis to ensure that no privacy sensitive information is leaked throughout the computation.} \blue{We note that the proposed solution evaluates a polynomial function as a whole without revealing the values of any of its components.}
(ii) \blue{Motivated by applications in distributed control, networked control, and distributed optimization, we respect the sparsity of the communication graph in the proposed protocol.
Our algorithm is \textit{fully} distributed, and hence 
 enables the agents themselves to evaluate polynomial functions---obviating the need for external parties, e.g. those used in \cite{Lu2018}.} Furthermore, the proposed algorithm is robust to dropout of an agent and lightweight in communication due to the adopted schemes from cryptography.

(iii) We establish conditions for privacy preservation of an agent with respect to a set of colluding agents for the proposed algorithm using both cryptography and system theory paradigms. \blue{In particular, privacy analysis results reported in Subsection \ref{sec:global_collude} directly descends from a networked control system point of view where multiple functions need to be computed, one for each agent. Results such as those reported in Theorem \ref{prp:net_coll} are absent in the cryptography literature, whereas they become relevant in networked control/optimization.}

\par The rest of the paper is organized as follows: In Section \ref{sec:not-pre}, we present necessary cryptographic tools for the paper; Section \ref{sec:prb-for} \blue{includes motivating examples} for polynomials and formulates the problem of preserving privacy for these functions, and Section \ref{sec:pro-alg} provides a solution to this problem. Privacy analysis of the proposed method is investigated in Section \ref{sec:pri-ana}; numerical results are provided in Section \ref{sec:sim}, and the paper closes with conclusions in Section \ref{sec:con}.
\section{ Notations and preliminaries }\label{sec:not-pre}
The sets of positive integer, nonnegative integer, integer, rational and real numbers are denoted by $\mathbb{N}$, $\mathbb{N}_0$, $\mathbb{Z}$, $\mathbb{Q}$, and $\mathbb{R}$, respectively.
We denote the identity matrix of size $n$ by $I_n$ and we  write $[n] :=\{ 0, 1, 2, \ldots, n\}$ for any $n \in \mathbb{N}$.
We assume a network of $\blue{N}$ agents represented by an undirected graph $\mathcal{G}(\calV, \blue{\calW})$, with node set $\calV=\{1,2, \ldots ,N\}$ and edge set $\calW$ given by a set of unordered pairs $\{i, j\}$ of distinct nodes $i$ and $j$.
We denote the set of neighbors of node $i$ by ${\calN_i}:=\{j\in \calV: \{i, j\}\subseteq \calW\}$, and  $\ONI :=\calN_i \cup {\color{blue}\{}i{\color{blue}\}}$. The cardinality of $\calN_i$ denoted by $d_i :=\left| {\calN_i} \right|$ is equal to the degree of node $i$.
Without loss of generality we consider the state variable of each agent $i$ as a scalar $x_i \in \mathbb{R}$; the extension to $x_i \in \R^{n_i}$, $n_i \geq 2$ is straightforward. We collect the state variables of all agents as $x := \col(x_1, x_2, \ldots, x_N) = [x_1^{\top}, \ldots, x_N^{\top}]^\top$ and the state variables of all agents except for agent $i$ as $x_{-i} := \col(x_1, \ldots, x_{i-1}, x_{i+1}, \ldots, x_N)$. 
\subsection{Cryptography primitives}\label{Cryptography}
 The Paillier encryption scheme consists of three steps ({\Gen}, {\Enc},  {\Dec}). 1) {\Gen}: Given the bit-length ($l$), generates {\color{blue}$(\scrN, \scrP, \scrQ)$} where $\scrN=\scrP\scrQ$ and $\scrP$ and $\scrQ$ are randomly selected $l$-bit primes, 2) {\Enc}: Given public key $\text{pk} = \scrN$ and a message $m\in {\Z_{\scrN}}$, pick uniform $r\leftarrow \mathbb{Z}_{\scrN}^{*}$ and output $c :=[{{(1+\scrN)}^{m}}.{{r}^{\scrN}}\,\bmod \,\scrN^2]$ as ciphertext, 3) {\Dec}: Given the secret key $\text{sk}=\phi (\scrN)=(\scrP-1)(\scrQ-1)$ and the ciphertext $c$ computes $m := \left[ \frac{[{{c}^{\phi (\scrN)}}\bmod \,\scrN^2]-1}{\scrN}.\phi {{(\scrN)}^{-1}}\,\bmod \scrN \right]$. Paillier encryption scheme is chosen plaintext attack secure based on hardness of decisional composite residuosity problem  \cite[p.~495-496]{katz}. 
It is easy to see that for any plaintext $m_1$ and $m_2$ and their respective encryptions $c_1$ and $c_2$, we have {\Dec}$({c_1} \cdot {c_2})= {m_1} + {m_2}$, i.e. the Paillier scheme is an additively HE also known as a PHE. We denote the encryption of a value ${m}$ by agent $i$'s public key ${\mathrm{pk}_i}$ as $\ENC {m}_{\mathrm{pk}_i}$ or {\color{blue}simply $\ENC {m}$ when the key is clear from the context}.
\par In $(t, n)$-threshold secret sharing, \blue{the aim is} to share a secret $s$ among some set of $n$ agents $a_1$, $a_2$, $\ldots$, $a_n$ by giving each one a share in such a way that only $t$ or more users can reconstruct the secret. In other words, no coalition of fewer than $t$ agents should get any information about $s$ from their collective shares. When $t = n$ and $s$ is $l$ bit length $s\in {{\{0,1\}}^{l}}$, \blue{the shares $s_1,\ldots,s_{n-1}\in {\{0,1\}}^l$ are chosen uniformly randomly while $s_n = s\oplus {\big( \oplus _{i=1}^{n-1}s_i \big)}$}, where $\oplus$ denotes bitwise exclusive \cite[p.~501-502]{katz}. The share of agent $a_i$ is $s_i$. 

\par Paillier scheme only accepts nonnegative integers $\mathbb{N}_0$ in its domain while we are interested in computations over $\mathbb{R}$. Therefore, an encoding-decoding scheme satisfying homogeneity and additivity conditions is incorporated. This scheme first 
discretizes a value ${v} \in \mathbb{R}$
to $\hat{{v}}\in \mathbb{Q}$, then maps it to an integer ${z}\in \mathbb{Z}$ by choosing an appropriate scale $L\in \mathbb{N}$, then to a nonnegative integer using ${m}={z}\,\,\bmod \,\Omega $, where $\Omega$ is a sufficiently large number. This process is invertible, i.e. given  $0 \le {m} \le \Omega$ we can recover  $\hat{{v}}\in \mathbb{Q}$.
To simplify the notation, we do not distinguish between the encoded value and the original real value of a quantity in modular operations.
\section{Motivating examples and problem formulation} \label{sec:prb-for}
We begin with examples motivating privacy considerations in computation of polynomials in networks. The problem formulation will be stated afterwards.
\subsection{Motivating examples}\label{Motivation}
Control of networked systems and optimization on networks heavily rely on computation of functions 
over state variables of neighboring agents. As mentioned earlier, the focus of this work is on private computation of polynomial functions. 
As a case in point, we provide an example in the context of game theoretic algorithms {\color{blue} followed by the problem of consensus on general functions.}
 
\subsubsection{Game theoretic algorithms}\label{subsec:game}

Consider a group of $N = |\calV|$ players (agents) that 
seek a generalized Nash equilibrium (GNE) of a noncooperative game with globally shared affine constraints \cite{yi2019operator}. Feasible decision set of all players is  
$X :=\prod_{i=1}^{N} \Gamma_i \cap \big \{x \in \R^{n}:\sum_{i=1}^{N} A_i x_i \geq \sum_{i=1}^{N} b_i\big\}$ where $ x_i$ takes values from a local admissible set $\Gamma_i \subseteq \mathbb{R}^{n_i}$, and $A_i \in \R^{m \times n_i}$, $b_i \in \R^{m}$ are local parameters of player $i$.
In this game, each player $i$ aims at minimizing her local cost function $J_i(x_i, x_{-i})$ subject to her feasible decision set $X_i(x_{-i}) := \big\{ x_i \in \Gamma_i: (x_i, x_{-i}) \in X\big\}$, i.e.,
\begin{equation}\label{e:game}
    \min_{x_i \in \R^{n_i}}J_{i}(x_i,x_{-i}) \quad \text{s.t.}\,\, x_i \in X(x_{-i}).
\end{equation}
A distributed GNE seeking algorithm is proposed in \cite{yi2019operator} where at step $1$ of this algorithm, each player $i \in \calV$ updates her decision at time index $k$ as
\be\label{e:grad}
    x_i(k+1) = \text{proj}_{\Gamma_i}\big[x_i(k) - \tau_{i}\big(\nabla_{x_i}{J_i(x_i(k), x_{-i}(k))}- A_i^\top\lambda_i(k)\big)\big],
\ee
where $\text{proj}_{\Gamma_i}(\cdot)$ is the Euclidean projection operator onto the set $\Gamma_i$, $\lambda_i$ is the Lagrange multiplier, and $\tau_i$ is the step size of player $i$. 

\par The gradient of the cost function $\nabla_{x_i}{J_i(x_i, x_{-i})}$  is generally a nonlinear function of decision variables of other players $x_{-i}$.
Therefore, player $i$ in the game generally needs the value of $x_{j}$ with $j \in \calV$ for running \eqref{e:grad}. Putting it differently, 
player $j$ must share her decision variable $x_j$ with player $i$. Sharing the decision variable $x_{j}$ over time can reveal information on the cost function $J_{j}(x_j,x_{-j})$ which includes privacy sensitive parameters of player $j$.

\par In the case of quadratic cost functions, $\nabla_{x_i}{J_i(x_i, x_{-i})}$ is affine and hence the scheme developed in \cite{alexandru2021private} can be used to evaluate it privately. Moreover, in the context of aggregative games,  \cite{gade2020privatizing} and \cite{shakarami2019privacy} 
have proposed noncryptographic based solutions in order to preserve privacy of decision variables. 
The quadratic costs and aggregative functions in the mentioned studies are special cases of the polynomial functions that we consider here.
  
{\color{blue}\subsubsection{Consensus on general functions}\label{subsec:gen_con}
Consensus is one of the fundamental elements in control of network systems with applications in cooperative control, network games, data fusion and distributed filtering. 
In consensus on general functions \cite{cortes2008distributed},
a continuous function $J: \R^N \to \R$ is considered for a set of $N = |\calV|$ agents connected through a (directed) communication graph $\mathcal{G} = (\calV, \calW)$
each with the dynamic
\[
\dot{x}_i = u_i, \qquad i\in\{1,\ldots,N\},
\]
where $x_i \in \R$. 
The problem is stated as designing $u_i: \R^N \to \R$ such that the agents reach consensus on $J(x_1(0), \ldots, x_N(0))$, meaning $x_i(t) \to J(x_1(0), \ldots, x_N(0))$ as $t \to \infty$.
It is shown in \cite[Prop. 10]{cortes2008distributed} that if each agent $i \in \calV$ applies the control input
\be\label{eq:gen_con}
u_i = \frac{1}{|\nabla_{x_i}{J}|}\sum_{j=1}^{N}{a_{ij}(x_i - x_j)},
\ee
the agents reach consensus on the general function $J$.
\par Evaluating $u_i$ requires that each agent $i \in \calV$ computes a function of other agents' state quantities which are privacy sensitive.  
 Depending on the form of the general function $J$, the control input $u_i$ can be a nonlinear function due to the term $|\nabla_{x_i}{J}|$, thereby motivating the need to develop and equip the agents with a protocol that can privately evaluate nonlinear functions. It is worth mentioning that conventional consensus becomes a special case of \eqref{eq:gen_con}, where the function $J$ becomes the arithmetic mean of initial conditions.

 Other prominent applications of nonlinear neighboring functions in networked systems are differential game approaches to multiagent collision avoidance in \cite{mylvaganam2017differential} where the agents are required to share their physical coordinates, as well as game-theoretic controllers in physical networks  \cite{dorfler2017gather} in which agents need to compute and communicate their nonlinear marginal cost functions.}
\par In this work, we propose a cryptography-based algorithm that enables a private computation of \blue{\textit{general}} polynomials over networks, thereby preserve privacy for a broad range of nonlinear functions appearing in game-theoretic\blue{, as well as networked control and optimization problems.} 
 The choice of studying polynomial functions is further motivated by the fact that
 any function continuous on a closed bounded set can be approximated by a polynomial with a desired accuracy (see the Stone-Weierstrass approximation theorem, e.g. \cite[p.~123]{dzyadyk2008theory}.)
\subsection{Problem formulation}\label{subsec:prb-for}
We consider a scenario where at each time index $k \in [K]$  agent $i \in \calV$ in the network $\mathcal{G}$ is interested in evaluating a $d \in \mathbb{N} $ degree polynomial $\calP_{i}(x_i, x_{\calN_i}): \mathbb{R} \times \mathbb{R}^{|\calN_i|} \to \mathbb{R}$; namely, 
\be\label{e:poly_gen}
\begin{aligned}
 \calP_{i}\big(x_i(k), x_{\calN_i}(k)\big):= 
 \sum\limits_{(p_1,p_2,\ldots,p_m) \in \calX_i} c_{(p_1p_2\ldots p_m)}x_1^{p_1}(k)x_2^{p_2}(k)\ldots x_m^{p_m}(k)
\end{aligned}
\ee
where $c_{(\cdot)} \in \R$, {\color{blue}$x_{\calN_i} := \col(x_j)_{j \in \calN_i}$ is the state variables of agent $i$'s neighbors} and 
\begin{align*}
   &\calX_i := \{(p_1,p_2,\ldots,p_m) \in \mathbb{N}_0^m: \\
    & p_1 + p_2 + \ldots + p_m \le d, j \notin \ONI \Rightarrow p_j = 0\}.
\end{align*}
As we can see \eqref{e:poly_gen} depends not only on agent $i$'s state variable $x_i$ but also on the state variable of her neighbors $x_j$ with $j\in \mathcal{N}_i$. 
We identify private and public values of the agents as: Private value of agent $i$ is $\PV_i := \{ x_i, c_{(\cdot)}\}$ that includes her state variable $x_i$ and all the coefficients $c_{(\cdot)}$ in \eqref{e:poly_gen}, private value of agent $j$ is $\PV_j := \{ x_j\}$ with $j \in \calN_i$, and public values of agent $i$ is the exponent of state variables in \eqref{e:poly_gen}, i.e., $\calX_i$. Notice that the agent $i$ shares the public values with agents $\calN_i$. 
\par We now provide two privacy assumptions that clarify the adopted setup in our problem formulation.
\begin{assumption}[Honest-but-curious]\label{as:hon}
Agents in a network $\mathcal{G}$ are honest-but-curious, also known as semi-honest, meaning that they follow the required protocol for interacting with other agents but are also interested in determining the private values in the network.
\end{assumption}
\begin{assumption}[Passive Adversary]
 An adversary $\mathcal{A}$ is probabilistic polynomial-time, passive, and communications among agents are done in her presence. The adversary $\mathcal{A}$ can be an agent in the network or an external party observing the communication.
\end{assumption}

\blue{\textit{Research goal.}} Our aim is to provide a \textit{privacy} preserving protocol for the \textit{exact} evaluation of (\ref{e:poly_gen}) for agent $i$. That is to say, only agent $i$ should be able to obtain the accurate value of  $\calP_{i}$
without revealing her own private value $\PV_i$ to any other agent $j$ and without gaining any privacy-sensitive information about $\PV_j$ other than the target function $\calP_{i}$.

\section{Proposed algorithm}\label{sec:pro-alg}
The solution we provide is based on PHE and secret sharing techniques.
In particular, we use Paillier's scheme to protect the privacy of $\PV_i$, and secret sharing for preserving the privacy of $\PV_j$, with $j \in \mathcal{N}_i$. For adopting these schemes in our privacy preserving algorithm, we rewrite the polynomial  \eqref{e:poly_gen} in a new from by making a distinction between its bivariate and multivariate terms. %
Namely, we write \eqref{e:poly_gen} as
\be\label{e:poly_extend}
\begin{aligned}
  \calP_{i}(x_i, x_{\calN_i})= \sum\limits_{j\in \calN_{i}}P_{j}(x_i, x_j)+ 
  Q_i(x_i, x_{\calN_i})
\end{aligned}
\ee
where
\begin{align*}
&P_{j}(x_i, x_j)=\sum\limits_{\p, \pj}{c_{p_ip_j}x_i^{\p}x_j^{\pj}}, 
\end{align*}
with $c_{p_ip_j} \in \R$, and $Q_i(\cdot, \cdot)$ contains the terms with at least two state variables from $x_{\calN_i}$.
Notice that $P_{j}$ is the summation of bivariate terms in \eqref{e:poly_gen}, with $x_i$ and $x_j$, $j\in \mathcal{N}_i$, being the corresponding two variables. {\color{blue} We dropped the argument $k$ in $x_i(k)$ and $x_j(k)$ to simplify the notations.}

As will be observed, our algorithm leverages additive and multiplicative secret sharing to preserve the privacy of neighbors of $i$ in the evaluation of \eqref{e:poly_extend}. The additive secrets will be primarily used to evaluate the bivariate terms 
in \eqref{e:poly_extend} whereas the multiplicative secret sharing is exploited to evaluate the multivariate terms, i.e. $Q_i(\cdot, \cdot)$. Motivated by this and to minimize the required communication in the protocol, we write $Q_i$ as a summation of multiplicative terms, namely:
\be\label{e:pol_mult}
 Q_i = \sum_{t=1}^{T}{Q_i^t} , \qquad Q_i^t(x_i, x_{\calN_i}) :=\prod\limits_{j\in \ONI}W_j^t(x_j)
\ee
with $W_j^t=\sum_{\qj}{c_{\qj}^{(t)}x_j^{\qj}}$, $Q_i^t (\cdot, \cdot)\ne 0$, $T \in \mathbb{N}$, and $c_{\qj}^{(t)} \in \R$. Note that $W_j^t$ is a univariate polynomial of $x_j$. {\color{blue}This will be further illustrated in Example \ref{Ex:ex1}.}

Private and public values in  \eqref{e:poly_extend} remain the same as in \eqref{e:poly_gen}; yet take the form   $\PV_i^{} := \{ x_i, c_{{\p}{\pj}}, c_{\qj}^{(t)}\}$  and $(p_j, q_j)$ for private and public values of agent $i$, respectively, and $\PV_j := \{ x_j\}$ with $j \in \calN_i$.
\subsection{Distributed secret sharing}\label{subsec:dis_sec}
As mentioned before, we use secret sharing for preserving the privacy of $x_j$, $j\in \mathcal{N}_i$, throughout computation of $\calP_i$. In particular, additive secret sharing is used in the bivariate part (namely, $P_{j}$) and multiplicative secret sharing over additively secret shared data is used in the multivariate terms in $\calP_i$ (namely, $Q_i$). 
 
\par To mask intermediate computations \cite{bonawitz2017practical}, we require that every agent $j \in {\ONI}$ to have shares of multiplication ${\color{blue}\sm_j}(k)$ and addition ${\color{blue}\sa_j}(k)$ for all $i \in \calV$ and for all $k \in [K]$  such that
    \bse\label{e:sec}
\be\label{e:sec_add}
\prod\limits_{j\in {\ONI}}{\sm_j(k)}\equiv 1\,\,\,\bmod \Omega
\ee
\be\label{e:sec_mul}
\sum\limits_{j\in {\ONI}}{\sa_j(k)}\equiv 0\,\,\,\bmod \Omega,
\ee
\ese
 where $\Omega$ is a publicly known and sufficiently large prime number and $\sm_j\ne 0$. 
The shares $\sm_j$ and $\sa_j$  are selected uniformly randomly $\forall k \in [K]$ from the following set:
\be\label{Gal}
{\Z_{\Omega }}=\left\{ 0,1,\ldots,\Omega -1 \right\}.
\ee
 It should be noted that based on the Fermat's little theorem every nonzero element in (\ref{Gal}) has a multiplicative inverse, meaning $(\forall \omega \ne 0 \in \mathbb{Z}_{\Omega})(\exists \omega^{-1}\in \mathbb{Z}_{\Omega})$ such that $(\omega \omega^{-1} \equiv 1 \quad \bmod \Omega)$, therefore choosing $\sm_j$ and $\sa_j$ in the required form \eqref{e:sec} is feasible \cite[p. 63]{hardy1979introduction}.

\par We note that although the ``secrets" ($0$ and $1$) are known to the agents,
the generation of the shares $\sm_j$ and $\sa_j$ is analogous to sharing of a secret $s$ explained in Subsection \ref{Cryptography}, and as such, we occasionally refer to this scheme as secret sharing (see also \cite{alexandru2021private} where a similar terminology is used for additive secret sharing). {\color{blue}In what follows, all computations are in $\bmod \, \, \Omega$, unless specified otherwise.}

\par Next, we generate the multiplicative and additive shares of the agents in a fully distributed manner. 
To this end, every agent $j \in \ONI$ selects uniformly randomly $\sm_{jh}$ and $\sa_{jh}$ from \eqref{Gal} $\forall k \in [K]$ such that
    \bse\label{e:sec_dis_1}
\be\label{e:add_dis_1}
\prod\limits_{h\in {\ONI}}{\sm_{jh}}\equiv 1
\ee
\be\label{e:mu_dis_1}
\sum\limits_{h\in {\ONI}}{\sa_{jh}}\equiv 0,
\ee
\ese
where $i \in \calV$. Then agent $j$ sends $\sm_{jh}$ and $\sa_{jh}$ for ${h\in \ONI\backslash j}$ to agent $i$ through a secure communication channel $\forall k \in [K]$, where agent $i$ then sends each share to its corresponding receiver, agent $h$. After this step, agent $j$ obtains the multiplication and addition $\forall k \in [K]$ shares as follows:
    \bse\label{e:sec_dis_2}
\be\label{e:mu_dis_2}
\sm_{j} := \prod\limits_{h\in {\ONI}}{\sm_{hj}} 
\ee
\be\label{e:add_dis_2}
\sa_{j} := \sum\limits_{h\in {\ONI}}{\sa_{hj}}.
\ee
\ese
Notice that the distributed shares $\sm_j$ and $\sa_j$ obtained in \eqref{e:mu_dis_2} and \eqref{e:add_dis_2} satisfy the relations \eqref{e:sec_add} and \eqref{e:sec_mul}, respectively, as desired.
\subsection{Distributed secret sharing using pseudorandom functions}\label{subsec:sec_psuedo}
Exchanging $|\calN_i| \times |\calN_i|$ random numbers in \eqref{e:add_dis_1} and \eqref{e:mu_dis_1} for every time index $k$ among the agents imposes extra communication loads in the network.  
This drawback can be circumvented using the idea of Pseudorandom Functions(PRFs)\cite[p.~79-159]{ lindell2017tutorials}.
A pseudorandom function $F: \{0,1\}^{l} \times \{0,1\}^{l} \to \{0,1\}^{l}$, where $\{0,1\}^{l}$ denotes an $l$-bit sequence, accepts two arguments as its inputs: a key $\kappa$ and a seed $\gamma$ and returns a random number $\rho$. PRFs cannot be differentiated from truly random functions by any efficient procedure that can get the values of the functions at arguments of its choice.
\par In order to generate multiplication $\sm_j$ and addition $\sa_j$ shares for $j \in {\ONI}$ and $\forall k \in [K]$ using PRFs, every agent $j \in {\ONI}$ randomly selects \blue{$\Fm_{jh}$} and \blue{$\Fa_{jh}$} from \eqref{Gal} such that
    \bse\label{e:PRF1}
\be\label{e:PRF_add_1}
\prod\limits_{h\in {\ONI}}{F(\Fm_{jh}, \gamma_k)}\equiv 1
\ee
\be\label{e:PRF_mul_1}
\sum\limits_{h\in {\ONI}}{F(\Fa_{jh}, \gamma_k)}\equiv 0,
\ee
\ese
for some $\gamma_k\in \mathbb{Z}_{\Omega}$ to be specified later. Then, agent $j$ sends $\Fm_{jh}$ and $\Fa_{jh}$ to ${h\in \ONI\backslash j}$ through agent $i$.

After this step, agent ${j\in \ONI}$ computes
    \bse\label{e:PRF2}
\be\label{e:PRF_mul_2}
F(\Fm_{j}) := \prod\limits_{h\in {\ONI}}{F(\Fm_{hj}, \gamma_k)} 
\ee
\be\label{e:PRF_add_2}
F(\Fa_{j}) := \sum\limits_{h\in {\ONI}}{F(\Fa_{hj}, \gamma_k)} 
\ee
\ese
to obtain the PRFs that are needed for generating her random shares. Agent $j \in \ONI$ then is able to get $\sa_{j}(k)$ and $\sm_{j}(k)$ by evaluating $F(\Fa_{j}, \gamma_k)$ and $F(\Fm_{j}, \gamma_k)$ for a specific seed $\gamma_k$.
\par We remark that  \eqref{e:PRF1} and \eqref{e:PRF2} \textit{only need to be executed once}, and the agents
do not need to communicate with each other after receiving the key $\kappa$. 
Moreover, we note that the (initial) seed is a public value, is the same for all agents ${\ONI}$, and should be distinct for every time index $k\in [K]$. To ensure this, before the start of the protocol, the agents can agree on a public value, namely $\gamma_0 = S \in \mathbb{Z}_{\Omega}$,
and then use 
$\gamma_k = F(., s(k))$ as the seed for all time (\blue{see Remark \ref{rem:com_shares} for further discussion)}.
\begin{example}\label{Ex:ex1}
As an example assume that we have a network with $\{\{1,2\}$,$\{1,3\}$,$\{1,4\}\}\subseteq \calW$ where agent 1 is interested in the evaluation of the following polynomial:
\begin{equation}\label{e:ex1}
\begin{aligned}
    {\calP}_{1}(x_1, x_2, x_3, x_4)=
  2x_1^2x_2 + 3x_1x_3 + 4x_1x_4^3 + x_1x_2^2x_3^2x_4 + 3x_1x_2^2x_3x_4. 
\end{aligned}
\end{equation}
Based on the representation \eqref{e:poly_extend}, we can specify bivariate parts as $P_{2} = 2x_1^2x_2$, $P_{3} = 3x_1x_3$, $P_{4} = 4x_1x_4^3$.
To write the multivariate parts, after factoring out the term $x_1x_2^2x_4$, we obtain $W_1^1 = x_1$, $W_2^1 = x_2^2$, $W_3^1 = x_3^2 + 3x_3$, and $W_4^1 = x_4$.
{\color{blue}Agents $1, 2, 3$, and $4$ generate  multiplicative and additive shares $\sm_j$, $\sa_j$ for $j \in \{1, 2, 3, 4\}$ either through \eqref{e:sec_dis_2} or \eqref{e:PRF2}.
This is illustrated in Fig. \ref{f:dis_sec}, where the generated shares, via \eqref{e:sec_dis_2}, are depicted for agent $2$ as an example.
She generates the required shares through \eqref{e:sec_dis_2} and sends them to agent $1$'s neighbors. Note that, as mentioned before, the communication between agent $2$ and each of agent $1$'s neighbors is carried through a secure channel.} 
 \begin{figure}[!h]
  {\color{blue}
\begin{center}
\includegraphics[width=0.35\textwidth]{./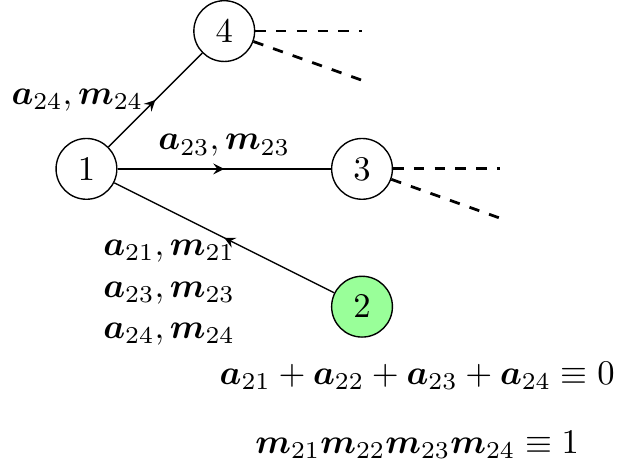}
\caption{Generation of multiplication and additive shares by agent $2$}\label{f:dis_sec}
\end{center}}
\end{figure}
\EXND
\end{example}
{\color{blue} \begin{remark}[Complexity of generating the secret shares]\label{rem:com_shares}
The required random shares $\sa$ and $\sm$ can be generated either through \eqref{e:sec_dis_2} or \eqref{e:PRF2}. Each of these choices may be preferred given the available computational resources and the communication constraints.
In terms of communication complexity, generating $\sa$ and $\sm$ by \eqref{e:sec_dis_2} in essence requires each agent to send a random number($l$-bit) to each neighbor of $i$ for every time index $k$ and thus communication complexity becomes $\calO(|\calN_i|^2lK)$. On the other hand, for generating $\sa$ and $\sm$ by \eqref{e:PRF2} each agent sends the key $\kappa$($l$-bit) of a pseudorandom function for the total time index $K$; therefore, the communication complexity becomes $\calO(|\calN_i|^2l)$. Notice, the reduction of $K$ in the second case, which shows the benefit of pseudorandom function scheme in terms of communication complexity. 
As for computational complexity, block cipher AES (Advanced Encryption Standard) which is a symmetric key encryption scheme is used in practice as the pseudorandom function $F(\cdot, \cdot)$ in \eqref{e:PRF2}. The details for AES can be found in \cite{daemen2002design} 
    which roughly consists of key addition, byte substitution, and diffusion layers. Computational complexity of generating random shares through \eqref{e:PRF2} is proportional to the cost of evaluating the function $F$. On the contrary, generating $\sa$ and $\sm$ by \eqref{e:sec_dis_2} is done by selecting the shares $\sa$ and $\sm$ uniformly randomly from the set \eqref{Gal}, which is much lighter computationally. 
\end{remark}}
\subsection{The protocol}\label{subsec:protocol}
Now that the additive and multiplicative shares are generated, we provide an algorithm that enables the private computation of $\calP_i$ in \eqref{e:poly_extend}.
To simplify the presentation and ease the notation, we discuss the required steps for the case that $T=1$ (see \eqref{e:pol_mult}) \blue{and we drop $t$ in the sequel.}
{We explain in Remark \ref{rem:extension} how the proposed algorithm can be extended to the case $T >1$.}

The formal steps of the algorithm is provided in the next page (see Algorithm \ref{Alg} \blue{and Fig. \ref{f:pro_schem}}).
To differentiate between a generic variable $x_i$ 
($x_j$, respectively) and its particular value at a given time index, we denote the latter by $\bld{x_i}$ ($\bld{x_j}$).
Recall that the structure of the involved polynomial functions of $(x_i, x_j)$ are known but the values $\bld{x_i}$ ($\bld{x_j}$) are considered private. 
Moreover, among the neighbors of agent $i$ a specific agent denoted by ${{D}_{i}}\in {\calN_i}$ is distinguished and her role becomes clear later (see Step 5 and Remark \ref{rem:3}).
\begin{enumerate}[S1), leftmargin=*, labelindent=0pt]
\item At the start of the algorithm, agent $i$ chooses independently her public key ${\mathrm{pk}_i}$ and private key $\ski$ for the Paillier scheme; then publishes her public key ${\mathrm{pk}_i}$. 
\item Agent $i$ uses her public key ${\mathrm{pk}_i}$ to encrypt her private quantities that appear in $P_{j}(x_i, x_j)$ and $W_j(x_j)$, and sends the corresponding encrypted terms, namely $ \ENC{c_{p_ip_j}\bld{x}_{\bld{i}}^{\p}} $ to agent $j\in {\calN_i}$ and $ \ENC{c_{\qj}} $ to agent $j\in{(\calN_i \setminus D_i)}$ for all ${\p}$, ${\pj}$ and $\qj$ where ${\p}$, ${\pj}$ and $\qj$ are the exponents of the respective polynomial for the corresponding agent. The reason behind this encryption is elaborated in Remark \ref{r:enc-why}.
Moreover, she computes ${{\mu }_{i}}=\,\,\left( \sm_i{{W}_i}(\bld{x_i}) \right)$  and records it for Step $4$.
\item Every agent $j\in \left( {\calN_i}\backslash {{D}_{i}} \right)$ encrypts $\sa_j$ using ${\mathrm{pk}_i}$ and evaluates the following expressions over the ciphertext
\bse
\be\label{e:not_last1}
\sigma_j=\,{{\prod\limits_{\p, {\pj}}{ 
 \ENC{c_{p_ip_j}\bld{x}_{\bld{i}}^{\p}}^{\bld{x}_{\bld{j}}^{\pj}}}}}{ \ENC{\sa_j} }\bmod \,\,\scrN^2
\ee
\be\label{e:not_last2}
 \ENC{\mu_j} =\,{\prod\limits_{{\qj}}{ 
 \ENC{c_{\qj}}^{\sm_j\bld{x}_{\bld{j}}^{\qj}}}} \bmod \,\, \scrN^2,
\ee
\ese
then sends $\sigma_{j}$ and $ \ENC{\mu_j} $ to agent $i$.
By the end of this step, all computations from the agents $j\in (\calN_i\setminus D_i)$ are carried out\footnote{In case coefficients $c_{\qj}\text{'s}$ are not privacy sensitive, then agent $j$ computes $\mu_j=\big(\sum\limits_{\qj}{c_{\qj}x_j^{\qj}}\big)\sm_j$ in \eqref{e:not_last2}.}. 
\item Agent $i$ decrypts $ \ENC{\mu_j} $ received in Step $3$ using $\ski$, computes the value
\be\label{e:mul}
{{\Psi }_{i}}=\prod\limits_{\begin{smallmatrix} 
 j\in {({\ONI})\setminus D_i}
\end{smallmatrix}}{{{\mu }_{j}}},
\ee
and sends the encrypted values $ \ENC{c_{\qj}\Psi_{i}}  $ with $j= D_i$ and for all  $\qj$ to agent $j= D_i$
\item Agent $j = D_i$,  using the values received in Step $4$, computes
\be\label{e:last_node1}
\Psi_{j} =\prod_{\qj} \ENC{c_{\qj}\Psi_{i}}  ^{(\sm_j{\bld{x_j}^{\qj})}}\bmod \,\,\scrN^2
\ee
and (\ref{e:not_last1}), and then sends ${{\sigma }_{j}}\Psi_{j} \,\bmod \,\,\scrN^2$
to agent $i$. The reason behind this step will be made clear in Remark \ref{rem:3}.
 \item Agent $i$ decrypts the received values in \eqref{e:not_last1} and values in Step $5$ using her secret key $\ski$ to obtain

\begin{equation*}
   \begin{aligned}
 &P_j(\bld{x_i},\bld{x_j})+{\sa_j} \,\,\,\,\,&{\forall}\,\,j\in \left( {\calN_i}\backslash {{D}_{i}} \right)\\
&P_j(\bld{x_i},\bld{x_j})+{\sa_j}+\prod\limits_{j\in {\ONI}}{\sm_j{{W}_j}(\bld{x_j})}\,\,\,\,\,\,&j={{D}_{i}}.  
\end{aligned}
\end{equation*}
 \item Agent $i$ sums the received results in Step $6$ and includes her own share of addition $\sa_i$ to obtain 
\[
\sum\limits_{j\in {\calN_i}}{{P_{j}}(\bld{x_i},\bld{x_j})}+\prod\limits_{j\in {\ONI}}{{{W}_j}(\bld{x_j})},
\]
where we have used (\ref{e:sec_add}) and \eqref{e:sec_mul}. After decoding, the above expression reduces to $\calP_i(\cdot)$ in \eqref{e:poly_extend} as desired.
\end{enumerate}
\begin{algorithm}[ht]
\DontPrintSemicolon
  
  \KwInput{$ \big\{\{c_{p_ip_j}, c_{\qj} \}_{j \in \calN_i}, \{x_j, \sa_j, \sm_j\}_{j \in {\ONI}}\big\}$}
  \KwOutput {Evaluation of ${{\calP}_{i}}(\bld{x_i}, \bld{x_{\calN_i}})$ given in \eqref{e:poly_extend}}
  Agent $i$ generates ${\mathrm{pk}_i}$ and $\ski$ and sends ${\mathrm{pk}_i}$ to agent $j \in \calN_i$\\
  \For{$j\in {\calN_i}$}   
  {
   \nonl agent $i$ using ${\mathrm{pk}_i}$ sends  
   $ \ENC{c_{p_ip_j}\bld{x}_{\bld{i}}^{\p}} $ to each agent $j \in \calN_i$ and sends $ \ENC{c_{\qj}} $
   to each $j \in \calN_i\setminus D_i$
   }
   \For{$j\in \left( {\calN_i}\backslash {{D}_{i}} \right)$}{
   \nonl agent $j$ computes $\sigma_j$ and $ \ENC{\mu_j} $ given in (\ref{e:not_last1}) and (\ref{e:not_last2})
   and sends the result to agent $i$ 
        }
     Agent $i$ computes $\Psi_i$ given in (\ref{e:mul}), and sends $ \ENC{c_{\qj}\Psi_{i}}  $ to agent $j = D_i$ for all $\qj$\\
     Agent $j = D_i$ computes $\sigma_j$ and $\Psi_j$ given in (\ref{e:not_last1}) and (\ref{e:last_node1}), then sends ${{\sigma }_{j}}\Psi_j\,\bmod \,\,\scrN^2$ to agent $i$ \\
Agent $i$ decrypts the received messages from her neighbors using $\ski$\\
Agent $i$ aggregates the results to obtain ${{\calP}_{i}}(\bld{x_i},\bld{x_{\calN_i}})$\\
Agent $i$ decodes the results to obtain ${{\calP}_{i}}(\bld{x_i},\bld{x_{\calN_i}})$
\caption{The protocol for private evaluation of polynomial \eqref{e:poly_extend} at time index $k$ with $T = 1$}\label{Alg}
\end{algorithm}
 \begin{figure}[!h]
 {\color{blue}
\begin{center}  
   \includegraphics[width=0.50\textwidth]{./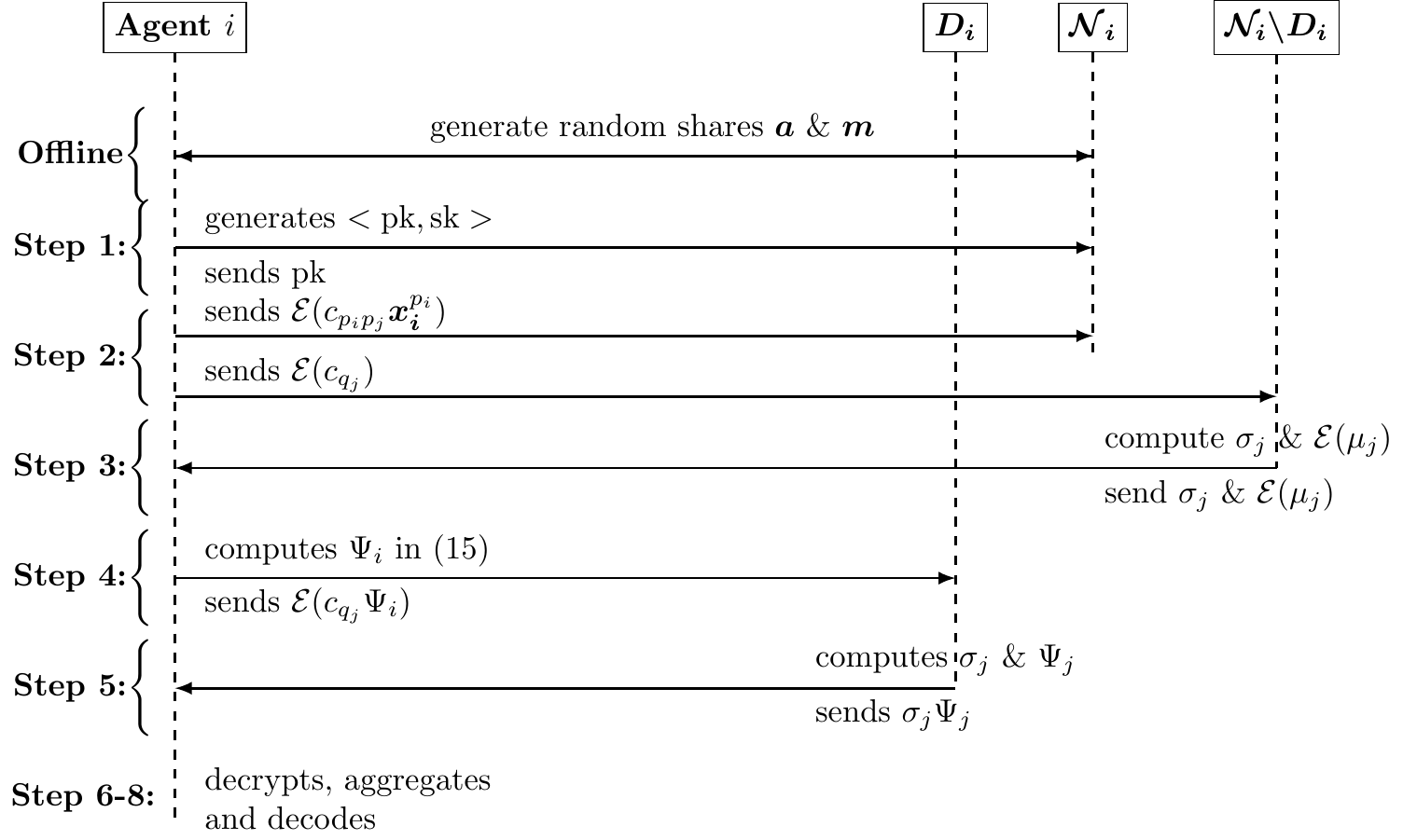}
\caption{The schematic of the protocol for private evaluation of polynomial \eqref{e:poly_extend} at time index $k$. The schematic should be read as, e.g., agent $i$ generates $<\text{pk}, \text{sk}>$ and sends pk to all her neighbors at step $1$. For other steps, similar description follows.  }\label{f:pro_schem} 
\end{center}}
\end{figure}
 A few remarks are in order concerning the proposed algorithm:
\begin{remark}[Extension to $T>1$] \label{rem:extension}
The proposed Algorithm \ref{Alg} can be easily extended to the case $T > 1$. 
This requires the agents $D_i$ and $i$ to repeat their tasks (Steps $4$ and $5$ of the algorithm) for every multivariate term $Q_i^t = \prod_{j\in {\ONI}}W_j^t(x_j)$; see \eqref{e:pol_mult}.
In this case, every agent $j \in \ONI$ also needs $T$ multiplicative shares $\sm_j$ which can be generated by \eqref{e:mu_dis_2} or \eqref{e:PRF_mul_2}.
It is worth mentioning that when $T = 0$, i.e. $\calP_i(\cdot)$ has no multivariate term, the proposed algorithm needs neither multiplicative shares $\sm_j$ nor the presence of a distinguished neighbor. Working with \eqref{e:poly_extend}, rather than \eqref{e:poly_gen}, allows us to capture this special case properly.
\end{remark}
\begin{remark}[The role of distinguished neighbor]\label{rem:3}
We designed Steps 4 and 5 of the algorithm such that the value of $\sum_{j\in \calN_{i}}P_{j}(x_i, x_j)$ and that of 
$Q_i = \sum_{t=1}^{T}{Q_i^t(x_i, x_{\calN_i})} $
in \eqref{e:poly_extend} remain hidden from both agent $i$ and $D_i$. In fact, agent $i$ can only evaluate the summation of these two terms, which amounts to the interested query in \eqref{e:poly_extend}.
Putting it differently, we can remove the distinguished neighbor from the algorithm at the expense of revealing the values $\sum\limits P_{j}(x_i, x_j)$ and $Q_i^t$ individually. 
This may not readily lead to a privacy breach for other agents, but it provides agent $i$ with extra information (beyond the query itself) that can compromise the privacy of her neighbors. Therefore, the distinguished neighbor $D_i$ should be chosen with the consensus of all neighbors of agent $i$, and without involvement of agent $i$ in this decision.
We again emphasize that the current algorithm is devised such that no information other than the query $\calP_i(\cdot)$ will be made available to the agent $i$.  
\end{remark}
\begin{remark}[Encryption]\label{r:enc-why}
It should be noted again that coefficients $c_{p_ip_j}$, $c_{{\qj}}$ and the variables $x_i$ in $P_{j}=\sum_{{\p} ,{\pj}}{c_{p_ip_j}}x_i^{\p}x_j^{\pj}$ and $W_j=\sum_{{\qj}}{c_{\qj}x_j^{\qj}}$ are sensitive data and their encryption are justified. For this reason, agent $i$ in Step $2$ of the proposed algorithm sends encrypted quantities $ \ENC{c_{p_ip_j}\bld{x}_{\bld{i}}^{\p}} $ and $ \ENC{c_{\qj}} $ to her neighbors. 
If $P_{j}$ has $\mathfrak{n}$ terms involving $x_j$ then agent $i$ has to encrypt $\mathfrak{n}$ values 
and sends them to agent $j$ for each $k\in [K]$, resulting in $\mathfrak{n}\times K$ encrypted values.
Clearly, both communication and computation costs are increased drastically with the increase of $\mathfrak{n}$.
A fully homomorphic encryption such as \cite{cheon2017homomorphic} can be employed to reduce the communication since agent $i$ can encrypt $c_{p_ip_j}$ and $x_i(k)$ for all $k \in [K]$ and allow agent $j$ to evaluate $P_{j}(x_i, x_j)$ over the ciphertext; leading to $\mathfrak{n} + K + 1$ encrypted values for the whole time interval. However, this benefit comes at the expense of increased computational complexity for agent $j$ due to the high computational load of fully homomorphic schemes.
\end{remark}
\begin{remark}[Beyond polynomial functions]\label{r:gen-class}
We can privately evaluate a wider class of functions represented by $$P_{j}(x_i, x_j) = \sum\limits_{{\p},{\pj}}c_{p_ip_j}{f_i^{({\p})}(x_i)f_j^{({\pj})}(x_j)}, \qquad 
W_j=\sum\limits_{{\qj}}{c_{{\qj}}g_j^{({\qj})}(x_j)},$$
where  
$f_{i}^{(\cdot)}: \R \to \R$, and $g_{j}^{(\cdot)}: \R \to \R$. 
This can be achieved by treating $f_i^{(\cdot)}(x_i)$ as $x_i$, and $f_j^{(\cdot)}(x_j)$ and $g_{j}^{(\cdot)}(x_j)$ as $x_j$ in Algorithm \ref{Alg}. 
This generalized class of functions essentially does not introduce extra communication and computation costs since all additional computations are performed over the plain text. 
\end{remark}
For a better illustration of the protocol, we provide a simple example.
\setcounter{example}{0}
\begin{example}(cont.)
Consider again the polynomial in \eqref{e:ex1}:
\begin{equation*}\label{e:ex1}
\begin{aligned}
   \calP_{1}=
  \underbrace{2x_1^2x_2}_{P_{2}}+\underbrace{3x_1x_3}_{P_{3}} + 
  \underbrace{4x_1x_4^3}_{P_{4}} +
 \underbrace{x_1}_{W_1}\underbrace{x_2^2}_{W_2}\underbrace{(x_3^2 + 3x_3)}_{W_3}\underbrace{x_4}_{W_4}. 
\end{aligned}
\end{equation*}
For the sake of simplicity, we assume $\bld{x_j}\in {\Z_{\ge 0}}$ for $j \in \{1, 2, 3, 4\}$, otherwise an encoding-decoding scheme is used. Let node $4$ be the distinguished neighbor.
Based on Algorithm \ref{Alg},
agent $1$ generates ${\text{pk}_1}$ and ${\text{sk}_1}$ and publishes ${\text{pk}_1}$.

As for the bivariate parts, agent $1$ sends $ \ENC{2\bld{x}_{\bld{1}}^2} $ to agent $2$, $ \ENC{3\bld{x}_{\bld{1}}} $ to agent $3$, and $ \ENC{4\bld{x}_{\bld{1}}} $ to agent 4. Here, among the multiplicative terms, only $W_3$ contains privacy sensitive coefficients; hence, agent $1$ sends the encrypted values
$ \ENC{1} $ and $ \ENC{3} $ to agent $3$.

\par In the next step, agent $2$ computes $\small{\sigma_2 =   \ENC{2\bld{x}_{\bld{1}}^2}^{\bld{x}_{\bld{2}}} \ENC{\sa_2} \,\,\bmod \,\,\scrN^2}$
and $\mu_2 = {{\sm_2}\bld{x}_{\bld{2}}^{\bld{2}}}$ and sends the results to agent $1$. 
Meanwhile, agent $3$ computes the following quantities and sends them to agent $1$:
$${\sigma_3 =  \ENC{3\bld{x}_{\bld{1}}}^{\bld{x}_{\bld{3}}} \ENC{\sa_3} }\,\,\bmod \,\,\scrN^2, \quad
{ \ENC{\mu_3}  ={{  \ENC{1}  }^{{\sm_3}\bld{x}_{\bld{3}}^{\bld{2}}}{  \ENC{3}  }^{{\sm_3}\bld{x}_{\bld{3}}} \,\,\bmod \,\,\scrN^2}}.$$

\medskip{}
Next, agent $1$ computes ${{\Psi }_{1}}=\left( \sm_1{\bld{x}_{\bld{1}}} \right) {{\mu }_{2}}{{\mu }_{3}}$ and sends $ \ENC{ \Psi_1} $ to agent $4$.
\par The distinguished neighbor $4$ computes $\small{\sigma_4={{  \ENC{4\bld{x}_{\bld{1}}} }^{\bld{x}_{\bld{4}}^3}} \ENC{\sa_4} }$ and $\small{\Psi_4 ={{  \ENC{\Psi_1}  }^{{\sm_4}\bld{x}_{\bld{4}}}\,\,\bmod \,\,\scrN^2}}$, and sends back $ \sigma_4  \Psi _4\,\,\bmod \,\,\scrN^2$ to agent $1$.

\par Finally, agent $1$ decrypts $\sigma_2$, $\sigma_3$,  and $ {{\sigma }_{4}}  \Psi_4$ and aggregates them with $\sa_1$ to obtain ${{\calP}_{1}}$.\EXND
\end{example}

\begin{table*}[h]
\centering
\caption{Computational and communication complexity of the proposed protocol}\label{tab:com}
\begin{tabular}{*5c}
\toprule
Polynomial part &  \multicolumn{2}{c}{Computational complexity} & \multicolumn{2}{c}{Communication complexity}\\
\midrule
{}   & agent $i$   & $j\in \calN_i$    & agent $i$   & $j\in \calN_i$ \\

\hline
Bivariate   &   \makecell{$\mathcal{O}\big(|\calN_i|K\sigma^3d^2)\big({\mathrm{Enc}})$, \\ $\mathcal{O}\big(|\calN_i|K\sigma^3\big)({\mathrm{Dec}})$} & $\mathcal{O}\big(Kl\sigma^2d^2\big)$   & $\mathcal{O}\big( |\calN_i|K\sigma d^2\big)$  & $\mathcal{O}\big( K\sigma\big)$\\
\hline
Multivariate   &  \makecell{$\mathcal{O}\big(|\calN_i|\sigma^3dT\big)({\mathrm{Enc}})$,\\ $\mathcal{O}\big(|\calN_i|K\sigma^3 T\big)({\mathrm{Dec}})$} & $\mathcal{O}\big(Kl\sigma^2dT\big)$   & $\mathcal{O}\big( |\calN_i|\sigma Td\big)$  & $\mathcal{O}\big( K\sigma T\big)$\\
\end{tabular}
\end{table*}

{\color{blue}\subsection{Computational and communication complexity}\label{sub:complexity}
 In this subsection, we quantify the computational and communication complexity of the proposed protocol. The result 
is summarized in Table \ref{tab:com}, with $\mathcal{O}(\cdot)$ indicating how the complexity of the algorithm scales with the parameter under investigation.  As can be seen from the table, the computational and communication complexity depend on
(i) the number of neighbors of agent $i$, i.e. $|\calN_i|$; (ii) the degree of the polynomial \eqref{e:poly_extend}, i.e. $d$; (iii) the total time index $K$; (iv) parameter $\sigma$, which denotes the size of the Paillier's public key $\scrN$ in bit; (v) parameter $l$, which denotes the size of the to be encrypted message $m$ in bit; and, (vi) number of multivariate terms in \eqref{e:poly_extend}, namely $T$.
For brevity, we provide below an explanation only for the computational complexity for agent $i$ with regard to the bivariate part of the polynomial \eqref{e:poly_extend}, i.e., the upper left quantities in Table \ref{tab:com}. The other entries of the table can be explained analogously. 

\par Recall from Subsection \ref{Cryptography} that modular multiplication is used in Paillier cryptosystem.  
Encryption of an $l$-bit plaintext takes $\mathcal{O}(l\sigma^2+\sigma^3) \approx \mathcal{O}(\sigma^3) $ multiplications modulo $\scrN^2$,
a multiplication of an encrypted value with a plaintext of $l$ bits takes $\mathcal{O}(l\sigma^2 )$ multiplications and 
a decryption takes $\mathcal{O}(\sigma^3)$ multiplications modulo $\scrN$.
In addition, a polynomial of two variables with degree $d$ has $(d^2+3d+2)/2)$ terms; resulting in computational cost of order $\mathcal{O}(d^2)$. The parameters $|\calN_i|$ and $K$ affect the computational cost linearly. 
\par Therefore, the computational complexity for agent $i$ due to encryption would be $\mathcal{O}\big(|\calN_i|K\sigma^3d^2\big)$ multiplication modulo $\scrN^2$, and it is $\mathcal{O}\big(|\calN_i|K\sigma^3\big)$ multiplication modulo $\scrN$ due to decryption. }

\begin{remark}[Robustness against agent dropouts]\label{rem:drop-out}
The proposed scheme is essentially robust to dropout of an agent, say $j$, during the execution of the algorithm.
This means that 
agent $i$ is able to evaluate a new polynomial $\Tilde{\calP}_i(x_i, x_{(\calN_i\backslash j)})$ that does not include $x_j$. Note that $\tilde{\calP}_i$ can be obtained from  ${\calP}_i$ by setting $p_j=0$ in \eqref{e:poly_gen}.

To endow Algorithm \ref{Alg} with this capability,
agent $i$ notifies the neighboring agents $\calN_i \backslash j$ that agent $j$ is no longer a part of the computation.
By doing so, every agent ${h\in \ONI\backslash j}$ should merge (add or multiply) her own shares with the shares of the dropped out agent. Namely,
\begin{align*}
    & \sa_{hh} (\Tilde{k}) \equiv \sa_{hh}(\Tilde{k}) + \sa_{hj}(\Tilde{k}),
    & \sm_{hh}(\Tilde{k}) \equiv \sm_{hh}(\Tilde{k})  \sm_{hj}(\Tilde{k}),
\end{align*}
where $\tilde{k}$ denotes the time index marking the dropout of agent $j$.
Every agent ${h\in \ONI\backslash j}$ obtains $\sa_h(\Tilde{k})$ and $\sm_h(\Tilde{k})$ from \eqref{e:sec_dis_2} by using the updated shares $\sa_{hh}$ and $\sm_{hh}$, and discarding the shares $\sa_{jh}(\Tilde{k})$ and $\sm_{jh}(\Tilde{k})$ which she previously generated for agent $j$.  Clearly, the newly obtained quantities $\sa_h(\Tilde{k})$ and $\sm_h(\Tilde{k})$ satisfy \eqref{e:sec}, and can serve as the input of the algorithm from the time index $k = \Tilde{k}$ onward.
\end{remark}

\section{Privacy analysis}\label{sec:pri-ana}
In this section, we focus on privacy preserving properties of the proposed algorithm. To study such properties, we partition $\calV$ into a set of corrupt $\calV_c$ and noncorrupt agents $\calV_{nc}$, where the corrupt agents may collude with each other and the noncorrupt agents are simply honest-but-curious.  
We first discuss the privacy guarantees of Algorithm \ref{Alg} in the absence and presence of colluding agents. Then, we shift our focus to a network-level analysis with multiple queries. %
\subsection{Local privacy analysis}
First, we formally prove the privacy of Algorithm \ref{Alg} in the case of no collusion. This shows that no privacy sensitive information is leaked throughout the communications dictated by the algorithm.  
\begin{proposition}\label{prp:no_coll}
Let  $\calN_i \cap \calV_c = \varnothing$ and $|\calN_i| >1$.
Then Algorithm \ref{Alg} computes $\calP_i(\cdot)$ accurately and 
preserve privacy of $\PV_j = \{ x_j\}$ for $j \in \calN_i$  against agent $i$.
Moreover, Algorithm \ref{Alg} preserves privacy of $\PV_i^{} = \{ x_i, c_{{\p}{\pj}}, c_{\qj}^{(t)}\}$ against the set $\calN_i$.
\end{proposition}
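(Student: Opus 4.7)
The plan is to prove the proposition in two parts: correctness of the output, followed by the two privacy claims.

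\textbf{Correctness.} I would trace Algorithm \ref{Alg} step by step. By the additive homomorphism of Paillier, the ciphertexts assembled in \eqref{e:not_last1}, \eqref{e:not_last2}, \eqref{e:mul}, and \eqref{e:last_node1} decrypt under $\ski$ to $P_j(\bld{x_i},\bld{x_j}) + \sa_j$ for every $j \in \calN_i \setminus D_i$, and to $P_{D_i}(\bld{x_i},\bld{x_{D_i}}) + \sa_{D_i} + \prod_{j \in \ONI} \sm_j W_j(\bld{x_j})$ for the distinguished neighbor. In Step 7, agent $i$ sums these quantities and adds her own share $\sa_i$; invoking \eqref{e:sec_add} to collapse the multiplicative shares into $1$ and \eqref{e:sec_mul} to cancel the additive shares, she recovers $\sum_{j\in\calN_i}P_j(\bld{x_i},\bld{x_j}) + \prod_{j\in\ONI}W_j(\bld{x_j})$, which is exactly $\calP_i(\cdot)$ in \eqref{e:poly_extend} with $T=1$. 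The case $T > 1$ then follows directly by Remark \ref{rem:extension}.

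\textbf{Privacy of $\PV_i$ against $\calN_i$.} I would reduce to the IND-CPA security of Paillier recalled in Section \ref{sec:not-pre}. The only messages originating from agent $i$ throughout the protocol are Paillier ciphertexts under $\mathrm{pk}_i$, namely $\ENC{c_{\p\pj}\bld{x}_{\bld{i}}^{\p}}$ and $\ENC{c_{\qj}}$ in Step 2 and $\ENC{c_{\qj}\Psi_i}$ in Step 4. Because $\calN_i \cap \calV_c = \varnothing$, no neighbor has access to $\ski$. A standard hybrid argument, replacing the ciphertexts one at a time by encryptions of zero, then shows that the joint view of $\calN_i$ in the real execution is computationally indistinguishable from an execution in which $\bld{x_i}$, $c_{\p\pj}$, and $c_{\qj}^{(t)}$ are replaced by arbitrary values of the same bit length.

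\textbf{Privacy of $\PV_j$ against agent $i$.} Here I would give a simulation-based argument. After Step 6 agent $i$'s decrypted transcript consists of the $|\calN_i|-1$ pairs $\big(P_j(\bld{x_i},\bld{x_j}) + \sa_j,\; \sm_j W_j(\bld{x_j})\big)$ for $j \in \calN_i \setminus D_i$, together with the single aggregate returned by $D_i$, which contains $P_{D_i}$, $\sa_{D_i}$, and $\prod_{j\in\ONI}\sm_j W_j(\bld{x_j})$. Since the share distribution step is conducted over secure channels, agent $i$ knows only her own shares $\sa_i$ and $\sm_i$, so the collection $\{\sa_j,\sm_j\}_{j\in\calN_i}$ is uniformly distributed on $\Z_\Omega$ and $\Z_\Omega \setminus \{0\}$ subject only to the single linking relations \eqref{e:sec_add} and \eqref{e:sec_mul}; this is precisely where the hypothesis $|\calN_i| > 1$ enters, as otherwise those relations determine the remaining share outright. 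Under this hypothesis, any fixed non-distinguished pair is marginally uniform in $\Z_\Omega \times \Z_\Omega$ and independent of $\bld{x_j}$, while the aggregate from $D_i$ is then uniquely fixed by $\calP_i$ and the previously sampled values. A simulator given only $\calP_i(\bld{x_i},\bld{x_{\calN_i}})$ can therefore sample the first $|\calN_i|-1$ pairs uniformly and compute the last entry via the aggregation identity used in Step 7, producing a statistically indistinguishable transcript. I expect the main obstacle to be handling the multiplicative masks rigorously: $\sm_j W_j(\bld{x_j})$ is uniform on $\Z_\Omega \setminus \{0\}$ only when $W_j(\bld{x_j}) \neq 0 \pmod{\Omega}$, so one must check that this exceptional event has negligible probability for a sufficiently large prime $\Omega$, thereby bounding the statistical gap between the real and simulated views.
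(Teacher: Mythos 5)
Your proposal follows essentially the same route as the paper: correctness by tracing the homomorphic computations and the cancellation of the shares via \eqref{e:sec_add}--\eqref{e:sec_mul}, privacy of $\PV_i$ from the semantic (IND-CPA) security of Paillier, and privacy of $\PV_j$ via a simulator that samples the masked pairs $(\sigma_j,\mu_j)$ uniformly subject to consistency with the output $\calP_i$, which is exactly the paper's real/ideal-world argument and its use of the hypothesis $|\calN_i|>1$. Your added caveat about the event $W_j(\bld{x_j})\equiv 0 \pmod{\Omega}$ forcing $\mu_j=0$ is a legitimate refinement that the paper's proof does not address, but it does not change the overall argument.
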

\begin{proof}
The proof uses real and ideal world paradigm to show the correctness and privacy of the algorithm. Correctness of the algorithm follows from Assumption \ref{as:hon} and privacy follows from the security of Paillier and secret sharing schemes. See \ref{app:no_collude} for a formal proof.
\end{proof}
Privacy of the neighbors of $i$ is susceptible to the collusion of agent $i$ with other neighbors. The reason for the latter is that, unlike agent $i$ that uses encryption, other agents rely on a secret sharing scheme.
Hence, we formalize next the privacy guarantees when collusion occurs with agent $i$.
\begin{theorem}[]\label{the1}
Let 
$i \in \calV_c$ and assume that $D_i\in \calV_{nc}$. %
Then Algorithm \ref{Alg} computes $\calP_i(\cdot)$ accurately and 
protect privacy of $\PV_j^{}$ for $j \in \calN_i \cap \calV_{nc}$, if 
$$|\calN_i \cap \calV_{nc}| > 1.$$

\end{theorem}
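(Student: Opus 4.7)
My plan is to extend the real/ideal simulation paradigm used to prove Proposition~\ref{prp:no_coll}, now with the coalition comprising agent $i$ together with all agents in $\calV_c \cap \calN_i$. Correctness is immediate: Assumption~\ref{as:hon} dictates that even corrupt agents execute the prescribed steps, so the algebraic identity assembled in Step~S7 still reduces to $\calP_i$; no part of this argument relies on $D_i$ or on the noncorrupt neighbors being isolated from the coalition.

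For privacy I will build a simulator $\Sim$ that is given $\calP_i(\bld{x_i},\bld{x_{\calN_i}})$, the private data $\PV_i$, and $\{\PV_j\}_{j \in \calV_c}$, and that produces a transcript statistically indistinguishable from the coalition's view. The only quantities the coalition receives from a noncorrupt neighbor $j \in \calN_i \cap \calV_{nc}$ are the Paillier ciphertexts $\sigma_j$ and $\ENC{\mu_j}$, and, for $j = D_i$, the combined ciphertext $\sigma_{D_i}\Psi_{D_i}$; once decrypted, each plaintext is masked by the shares $\sa_j$, $\sm_j$. The simulator replays every message originating from a corrupt agent verbatim, draws fresh uniform values in $\Z_{\Omega}$ as the simulated plaintexts for each noncorrupt $j\neq D_i$, and then fixes the simulated plaintext of the $D_i$-message deterministically so that the final aggregation in Step~S7 reproduces the target $\calP_i$.

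The central technical step is to show that, conditioned on everything revealed during the share-generation phase, the masks $\{\sa_j\}_{j \in \calN_i \cap \calV_{nc}}$ (resp. $\{\sm_j\}$) are jointly uniform on a hyperplane of $\Z_{\Omega}$ of dimension $|\calN_i \cap \calV_{nc}|-1$. This is where the hypothesis $|\calN_i \cap \calV_{nc}| > 1$ becomes essential: by construction \eqref{e:sec_dis_1}--\eqref{e:sec_dis_2}, each noncorrupt $\sa_j$ absorbs a summand $\sa_{hj}$ contributed by at least one other noncorrupt $h \in \calN_i \cap \calV_{nc}$, and this summand is transmitted to $j$ through a secure channel and is never exposed to the coalition. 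Hence no single $\sa_j$ (or $\sm_j$) is determined, and each real masked plaintext is distributed identically to a uniform draw. A standard hybrid argument over the $|\calN_i \cap \calV_{nc}|-1$ independent ciphertexts, together with the IND-CPA security of Paillier invoked to swap each encryption of a genuine plaintext for an encryption of a uniform one, then yields the desired computational indistinguishability between the real and simulated views.

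The subtlety I anticipate most is the joint treatment of the additive and multiplicative masks as they appear \emph{together} inside the $D_i$-message $\sigma_{D_i}\Psi_{D_i}$, whose decryption equals $P_{D_i}(\bld{x_i},\bld{x_{D_i}}) + \sa_{D_i} + \prod_{j\in\ONI}\sm_j W_j(\bld{x_j})$. I must argue that, although the product $\prod_{j\in\ONI}\sm_j W_j(\bld{x_j})$ is a single aggregated quantity, the factor $\sm_{D_i}$ remains unknown to the coalition (again using $|\calN_i \cap \calV_{nc}|>1$), so that no individual $W_j$ with $j \in \calN_i \cap \calV_{nc}$ can be isolated, and that the simulator can compensate for this aggregated term by its deterministic choice of the $D_i$-plaintext without inducing any detectable bias. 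Once this is established, the remainder of the argument is a direct adaptation of the no-collusion proof given in~\ref{app:no_collude}.
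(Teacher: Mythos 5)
Your overall strategy coincides with the paper's: a real/ideal simulation argument in which correctness follows from Assumption~\ref{as:hon} (corrupt agents still follow the protocol), and privacy follows from the joint uniformity of the additive and multiplicative shares of the noncorrupt neighbors, which is guaranteed precisely because each such neighbor's share absorbs a sub-share from at least one other noncorrupt neighbor ($|\calN_i \cap \calV_{nc}|>1$) delivered over a secure channel. The paper likewise reduces to the worst case $|\calN_i\cap\calV_{nc}|=2$ with $\calV^i_{nc}=\{h,D_i\}$, lets the simulator draw fake shares $\hat{\sa}_h,\hat{\sa}_{D_i},\hat{\sm}_h,\hat{\sm}_{D_i}$ and fake states $\bld{\hat{x}_h},\bld{\hat{x}_{D_i}}$ consistent with the share constraints and with the output value $\calP_i$ (using the coalition-known factor $\xi=\prod_{j\in\calV^i_c}W_j(\bld{x_j})$), and concludes that $\sigma_h$, $\mu_h$, $\sigma_{D_i}+\Psi_{D_i}$ have the same distribution in the real and simulated views; your device of drawing the non-$D_i$ plaintexts uniformly and completing the $D_i$-message by consistency is distributionally the same construction.

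There is, however, one step in your write-up that does not go through in this setting: the hybrid in which you invoke IND-CPA security of Paillier to swap encryptions of genuine plaintexts for encryptions of uniform values. In Theorem~\ref{the1} the coalition contains agent $i$ itself ($i\in\calV_c$), hence it holds the secret key $\ski$ and can decrypt every ciphertext it receives; no reduction to IND-CPA can be carried out against a key-holding distinguisher. This is exactly why the paper's proof of Theorem~\ref{the1}, unlike that of Proposition~\ref{prp:no_coll}, never appeals to semantic security and argues entirely at the plaintext level. Fortunately your preceding claim already suffices: the masked plaintexts are exactly uniform (jointly, subject to the constraints known to the coalition), and re-encrypting identically distributed plaintexts with fresh randomness yields identically distributed ciphertexts, so the IND-CPA hybrid should simply be deleted rather than repaired. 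Two minor further points: the multiplicative shares are jointly uniform over a coset of the multiplicative group of $\Z_{\Omega}$, not a ``hyperplane''; and, like the paper, you implicitly need $W_h(\bld{x_h})\neq 0$ for $\mu_h=\sm_h W_h(\bld{x_h})$ to be uniform over the nonzero residues, an edge case worth stating explicitly.
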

\begin{proof}
The proof is built on Proposition \ref{prp:no_coll} and uses real and ideal world paradigm to show the correctness and privacy of the algorithm. See \ref{app:no_collude}.
\end{proof}

By Theorem \ref{the1}, privacy of the neighbors of $i$ is fully preserved as long as agent $i$ has at least two noncorrupt agents and the distinguished agent does not collude with agent $i$. We note again that if the distinguished neighbor colludes with agent $i$, the subqueries $\sum_{j\in \calN_{i}}P_{j}(x_i, x_j)$ and $Q_i$ in \eqref{e:poly_extend} can still be privately and accurately computed (see also Remark \ref{rem:3}). 
\begin{remark}\label{rem:lin_con}
In the context of (average) consensus the state of the art definition for privacy is that an adversary cannot estimate the value of $x_j$ with any accuracy (see for example the definition of privacy in \cite{ Ruan2019}).
At the first glance, it seems that privacy guarantees in Theorem \ref{the1} is not stringent enough compared to this definition. However, we argue that in the consensus type problems, the proposed method guarantees the same level of privacy that exists in the literature. 
The reason is that in the case of consensus protocols, the function $\calP_i(\cdot)$ becomes affine, i.e, $W_j^t(\cdot)=0$ for all $j\in \calV$.  Hence, as long as $i$ has at least one noncorrupt neighbor $h \ne j$, an attempt of agent $i$ to infer $x_j$ would at best lead to a linear equation of the form ${{x}_{j}}+{{x}_{h}}=b$. It is then clear that agent $i$ cannot estimate the value of $x_j$ with any accuracy, i.e., $x_j$ can belong to $(-\infty, \infty)$.
On the contrary, in the case of polynomial functions, 
the mere knowledge of the target function
$\calP_i(\cdot)$ may provide agent $i$ an idea about $x_j$; an ellipsoid being a simple example.  Finally, we recall that the distinguished neighbors become redundant in the case of affine functions as they only contribute to the computation of the multivariate polynomials in \eqref{e:poly_extend}.
\end{remark}

\subsection{Network privacy analysis}\label{sec:global_collude}
So far we have examined privacy concerns that may result from the computation of $\calP_i(\cdot)$, for some $i\in \calV$,  following Algorithm \ref{Alg}. Recall that in an interconnected network each agent aims to compute a function of her neighbors.
Analogous to Theorem \ref{the1}, we can show that the execution of Algorithm \ref{Alg} by every agent $i \in \calV$ protects privacy of $\PV_j^{}$ for $j \in \calV_{nc}$.
However, depending on the class of functions to be computed, colluding agents $\calV_c$ may be able to infer privacy sensitive variables of noncoluding agents by putting together the results of their queries and carrying out a posterior analysis.
Note that such potential privacy breach is oblivious to the employed privacy-preserving algorithm and descends directly from the problem setup, namely that each agent is computing a function $\calP_i(\cdot)$. The interest in studying such privacy considerations is to first highlight the inevitable limits in the privacy guarantees, and second to  provide the designer of the control/optimization algorithm  with valuable privacy related insights. 

\par The first observation is that if the number of noncolluding agents is greater than the number of colluding ones, namely
\[
 \left| \calV_{nc} \right|>\left| \calV_{c} \right|,
\]
then the colluding agents cannot uniquely infer the vector $\{x_j\}_{j\in \calV_{nc}}$. 
However, the above guarantee is weak in that it does not ensure privacy of a specific noncorrupt agent. Next, we investigate more closely the conditions under which privacy of a single agent is guaranteed against the collective information obtained by colluding agents across the entire network.
\par Let $|\calV_c|=n$, $|\calV_{nc}|=m$. Observe that collusion of $n$ corrupt agents results in a set of polynomial equations:
\be\label{e:nl}
\Phi (x_c,x_{nc})=b,
\ee
where $x_{nc}=\{x_i\}_{i\in \calV_{nc}}$, $x_c=\{x_i\}_{i\in \calV_{c}}$ , $b\in \R^n$, and $\Phi: \R^{(n+m)}\rightarrow \R^n$.  Here, $x_{nc}$ is the indeterminate set, whereas $b$, and $x_c$, and the polynomial functions in $\Phi$ are known to the colluding agents.  
\par For technical reasons and in order to write the results more explicitly, we assume that for each $i\in \calV_c$, %
at most one variable from the set $\{x_j : {j\in \calV_{nc}\setminus i}\}$ contributes to the product of $ W_j^t$'s in \eqref{e:poly_extend}.

Moreover, without loss of generality assume that the first $m$ agents are noncorrupt. Consequently, \eqref{e:nl} reduces to 
\be\label{e:semi-lin}
\bbm a_{11} & a_{12} & \cdots & a_{1m}\\ a_{21} & a_{22} & \cdots & a_{2m} \\ \vdots & \vdots & \ddots & \vdots \\ a_{n1}& a_{n2} &\cdots & a_{nm}\ebm 
\bbm P(x_1) \\ P(x_2) \\ \vdots\\ P(x_m)\ebm=b,
\ee
where the nonlinear map $P:\R \rightarrow \R^r$ is given by $P(\alpha ):=\left[ \begin{matrix}
   \alpha  & {{\alpha }^{2}} & \cdots  & {{\alpha }^{r}}  \\
\end{matrix} \right]^\top,\,\,\forall \alpha \in \mathbb{R}$,
and $a_{ij}\in \R^{1\times r}$. Here $r$ is the maximum degree of the polynomials in \eqref{e:nl}, in terms of the indeterminate variables  $x_{nc}$. 
\par It is illustrative to first look at the special case of affine functions, where $r=1$.
Then, solutions of \eqref{e:semi-lin} are completely characterized  by
\[
x_{nc}= x_{nc}^* + (I_m-A^+A) v, \quad v\in \R^m.
\]
where $A=[{{a}_{ij}}]$, $A^+$ denotes the Moore-Penrose inverse of $A$, and $x^*$ is the vector containing the true values of $\{x_i\}_{i\in  \calV_{nc}}$.
Consequently, the value of $x_i$ with $i \in \calV_{nc}$ is uniquely identified if and only if 
\be\label{e:cond}
e_i^\top \Pi =0, 
\ee
where $\Pi:=I_m-A^+A$ and $e_i$ is the $i$th unit vector of the standard basis in $\R^m$.
Indeed if \eqref{e:cond} holds, then $x_i=x_i^*$.  Conversely,  if \eqref{e:cond} does not hold, then $x_i$  has at least two distinct solutions $x_i^*$ and $x_i^*+ \norm{e_i^T\Pi}^2$, where the latter is obtained by setting $v=e_i$ and noting that $\Pi^2=\Pi$.
The situation for $r\geq 1$ becomes more complex and gives rise to the following result:
\begin{theorem}\label{prp:net_coll}
The private variables $\{x_i\}_{i\in \calV_{nc}}$ are uniquely identified from \eqref{e:semi-lin} if and only if 
\be\label{e:identify}
\big(\{P(x_i^*)\} + \im (e_i^\top \otimes I_r)\Pi\big) \cap \im P =  \{P(x_i^*)\},
\ee
where ``$\otimes$" denotes the Kronecker product and $\im P=\{ y\in \R^r: \exists \alpha,  y=P(\alpha) \}$. 
\end{theorem}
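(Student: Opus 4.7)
The plan is to lift \eqref{e:semi-lin} to a standard affine system, characterize its entire solution set via the Moore--Penrose pseudoinverse, and then intersect with the nonlinear constraint that each block must lie on the image of $P$.

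First I stack the blocks into $\hat y := \col(P(x_1),\ldots,P(x_m))\in\R^{mr}$ and concatenate the row-blocks into $A := [a_{ij}]\in\R^{n\times mr}$, so that \eqref{e:semi-lin} reads $A\hat y = b$. Writing $\hat y^* := \col(P(x_1^*),\ldots,P(x_m^*))$ for the particular solution induced by the true private vector, the entire solution set of this (purely linear) system is $\hat y = \hat y^* + \Pi w$ with $w\in\R^{mr}$ and $\Pi := I_{mr} - A^{+}A$, in exact analogy with the affine case leading to \eqref{e:cond}. Projecting onto the $i$-th block by means of the Kronecker selector $(e_i^\top\otimes I_r)$ yields
\[
\hat y_i \;=\; P(x_i^*) + (e_i^\top\otimes I_r)\Pi w,
\]
so every admissible $\hat y_i$ is confined to $\{P(x_i^*)\}+\im\bigl((e_i^\top\otimes I_r)\Pi\bigr)$. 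On the other hand, because $x_i\in\R$, we must also have $\hat y_i = P(x_i)\in\im P$, and hence $\hat y_i$ lies in the intersection on the left-hand side of \eqref{e:identify}. Since the first coordinate of $P(\alpha)$ is $\alpha$ itself, $P$ is injective, so recovering $\hat y_i$ is equivalent to recovering $x_i$.

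The sufficiency direction of \eqref{e:identify} is now immediate: if the intersection collapses to $\{P(x_i^*)\}$, then $\hat y_i = P(x_i^*)$ and, by injectivity of $P$, $x_i = x_i^*$. For the converse I would proceed by contrapositive, picking a second point $P(\tilde x_i)\ne P(x_i^*)$ inside the intersection and a vector $w_0$ realizing $(e_i^\top\otimes I_r)\Pi w_0 = P(\tilde x_i)-P(x_i^*)$, so that the candidate $\hat y' := \hat y^* + \Pi w_0$ solves $A\hat y'=b$ and carries $P(\tilde x_i)$ in its $i$-th block. The main obstacle is that the remaining blocks $\hat y'_j$ need not lie on $\im P$ a priori; to close this direction I expect to exploit the freedom in the fiber $(e_i^\top\otimes I_r)\Pi w = P(\tilde x_i)-P(x_i^*)$ together with the standing assumption that for each $i\in\calV_c$ at most one unknown variable contributes to each multivariate product, which decouples the polynomial curves across blocks enough to perturb $w_0$ within the fiber so that every $\hat y'_j$ is brought back onto $\im P$ without disturbing the $i$-th block. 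This decoupling step, rather than the linear-algebraic machinery of Steps~1--2, is where I expect the technical difficulty to concentrate, and it is what sets the nonlinear case ($r\ge 2$) apart from the affine specialization \eqref{e:cond}.
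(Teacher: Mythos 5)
Your lifting and linear-algebraic setup coincide with the paper's own proof: there too one sets $y_i=P(x_i)$, replaces \eqref{e:semi-lin} by the pair ``$Ay=b$ and $y_i\in\im P$ for all $i$'', parametrizes all solutions of $Ay=b$ as $y^*+\Pi v$ with $\Pi=I_{mr}-A^{+}A$, and reads off the $i$-th block through $(e_i^\top\otimes I_r)\Pi$ to arrive at the intersection in \eqref{e:identify}. Your sufficiency argument (the intersection being the singleton $\{P(x_i^*)\}$ forces $y_i=P(x_i^*)$, hence $x_i=x_i^*$ by injectivity of $P$) is exactly the paper's, and is complete.

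The converse, however, is left as a plan rather than a proof: you only ``expect'' to perturb $w_0$ within the fiber $\{w:(e_i^\top\otimes I_r)\Pi w=P(\tilde x_i)-P(x_i^*)\}$ so that the remaining blocks land back on $\im P$, and you do not carry this out. Nor is it clear that the standing assumption on the products $W_j^t$ delivers such a decoupling: that assumption shapes how the polynomial equations in \eqref{e:nl} are formed, not the geometry of $\ker A$, and the fiber over the $i$-th block may be a single point, leaving no freedom to repair the other blocks. For comparison, the paper does not attempt this repair at all: its converse is the one-line assertion that any $y_i$ in the intersection is attained by a solution of the relaxed system \eqref{e:cy-comp}, i.e. identifiability of $x_i$ is assessed through the $i$-th block alone, with the other blocks treated as unconstrained by $\im P$ for this purpose. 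So you face a choice: either adopt that blockwise reading, in which case your converse closes immediately from your own parametrization and no perturbation argument is needed, or insist that the exhibited alternative solution satisfy $y_j\in\im P$ for every $j$, in which case you are proving a strictly stronger statement than the paper's argument establishes and the decoupling step you anticipate must actually be supplied --- as written, it is the missing piece of your proof.
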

Note that in case $r=1$, we have $\im P=\R$ and the conditions reduces to $\im (e_i^\top \Pi)$ being zero, which is equivalent to \eqref{e:cond}.

\begin{pfof}{Theorem \ref{prp:net_coll}}
Let $y_i=P(x_i)$, and $y=\col(y_i)$, $i\in \calV_{nc} $. Then, we can equivalently rewrite \eqref{e:semi-lin} as
\bse\label{e:cy-comp}
\begin{align}
\label{cy}
Ay&= b,\\
y_i\in &\im(P), \forall i.
\end{align}
\ese
Clearly, any solution to \eqref{e:semi-lin} satisfies \eqref{e:cy-comp}. Conversely, any solution to \eqref{e:cy-comp} can be mapped back to a solution of \eqref{e:semi-lin}. 
Now, all solutions to \eqref{cy} are given by
\[
y=y^* + (I_{mr}-A^+A) v, \quad v\in \R^{mr},
\]
where $y^*=\col(y_i^*)$ with $y_i^*:=P(x_i^*).$ Looking at the $i$th block row, we find that
\[
y_i=P(x_i^*) + (e_i^\top \otimes I_r)\Pi v, \quad v\in \R^{mr},
\]
where $\Pi=I_{mr}-A^+A$.  Consequently, any solution to \eqref{e:cy-comp} satisfies 
\[
y_i \in \big(\{P(x_i^*)\} + \im (e_i^\top \otimes I_r)\Pi\big) \cap \im P.
\]
Moreover, any $y_i$ satisfying the above inclusion is a solution to \eqref{e:cy-comp}. 
We conclude that  $P(x_i)$, and thus $x_i$, is uniquely identifiable if and only if \eqref{e:identify} holds.
\end{pfof}
\section{Case study}\label{sec:sim}
We demonstrate privacy and performance of the proposed algorithm in a networked system by considering a noncooperative game as  described in subsection (\ref{Motivation}) with $\blue{N} = 30$. Each player aims to minimize a cost function given by
\begin{equation*}
  J_i(x_i, x_{-i}) = a_ix_i^2 + x_i\big(\sum_{j \in {\calN_i}}{c_{ij,01}x_j}\big) + {\prod_{j \in {\ONI}}\big({c_{j,1}x_j + c_{j,2}x_j^2}}\big),   
\end{equation*}
where $x_i$ takes value from a local admissible set $\Gamma_{i} = [0,2]$. The actions need to satisfy a global affine constraint $\sum_{j\in \calV}{x_j} \geq 1$. Moreover, we assume that the players adopt the scheme in \cite{yi2019operator} for reaching GNE (see \eqref{e:grad}) with $\tau_i = \tau$. The dynamics of player $i$ is then given by
\be\label{e:poly_game}
 x_i(k+1) = \text{proj}_{\Gamma_i}\big(x_i(k) -\tau (2a_ix_i(k) + \sum_{j \in \mathcal{N}_i} {P_{j}} +  \prod_{j \in {\ONI}}W_j -\lambda_i)\big),
\ee
where $P_{j} = c_{ij,01}x_j$, $W_i = c_{i,0} + 2c_{i,1}x_i$ and $W_j = {c_{j,1}x_j + c_{j,2}x_j^2}$
are the terms specified in \eqref{e:poly_extend} and $a_i$, $c_{ij}$, and $c_{j}$ are player $i$'s private cost function parameters, randomly picked from $\Gamma_i$ for the simulation purposes.

The aim here is to privately evaluate \eqref{e:poly_game} using Algorithm \ref{Alg}.
To this end, we set $\tau =0.01$, and choose the length of Paillier's key $\scrN$ and $\Omega$ in \eqref{Gal} equal to $1024$ and $200$ bits, respectively. We assume that player $i$ has 3 neighbors, and thus her cost function depends explicitly on decisions of those neighboring players. The computations are performed\footnote{\url{https://github.com/teimour-halizadeh/polynomial-evaluation}} using a  $2.1$ GHz Intel Core i5 processor drawing on modules from Python library \cite{PythonPaillier}. Moreover, we have evaluated player $i$'s decision trajectory using plain signals, i.e. without any privacy concerns.
\par As it can be seen from Fig. \ref{f:Node1} the trajectory of player $i$ asymptotically converges to the origin  using the proposed algorithm similar to the case where a public algorithm is used. This implies that the proposed algorithm introduces no systematic error in the computation, thereby  certifying the correctness of the scheme (see also Theorem \ref{the1}). In order to investigate the computation and communication load of the proposed protocol, we change two parameters in the algorithm: 1) the length of the Paillier's key $\scrN$ \blue{in bits ($\sigma$)} and 2) number of neighbors of the player. 
The length of $\sigma$ plays an important role in the security of the Paillier cryptosystem; generally the greater the length of $\sigma$ is the more secure the Paillier scheme becomes.  As for the change in the number of neighbors, we execute the algorithm for the case $|\mathcal{N}_i| = 9$ and $|\mathcal{N}_i| = 27$. 
The results of the aforementioned changes on the computation time per time-step of the algorithm are illustrated in 
Fig. \ref{f:Computation}. 
 \begin{figure}[H]
\begin{center}
\includegraphics[width=0.4\textwidth]{./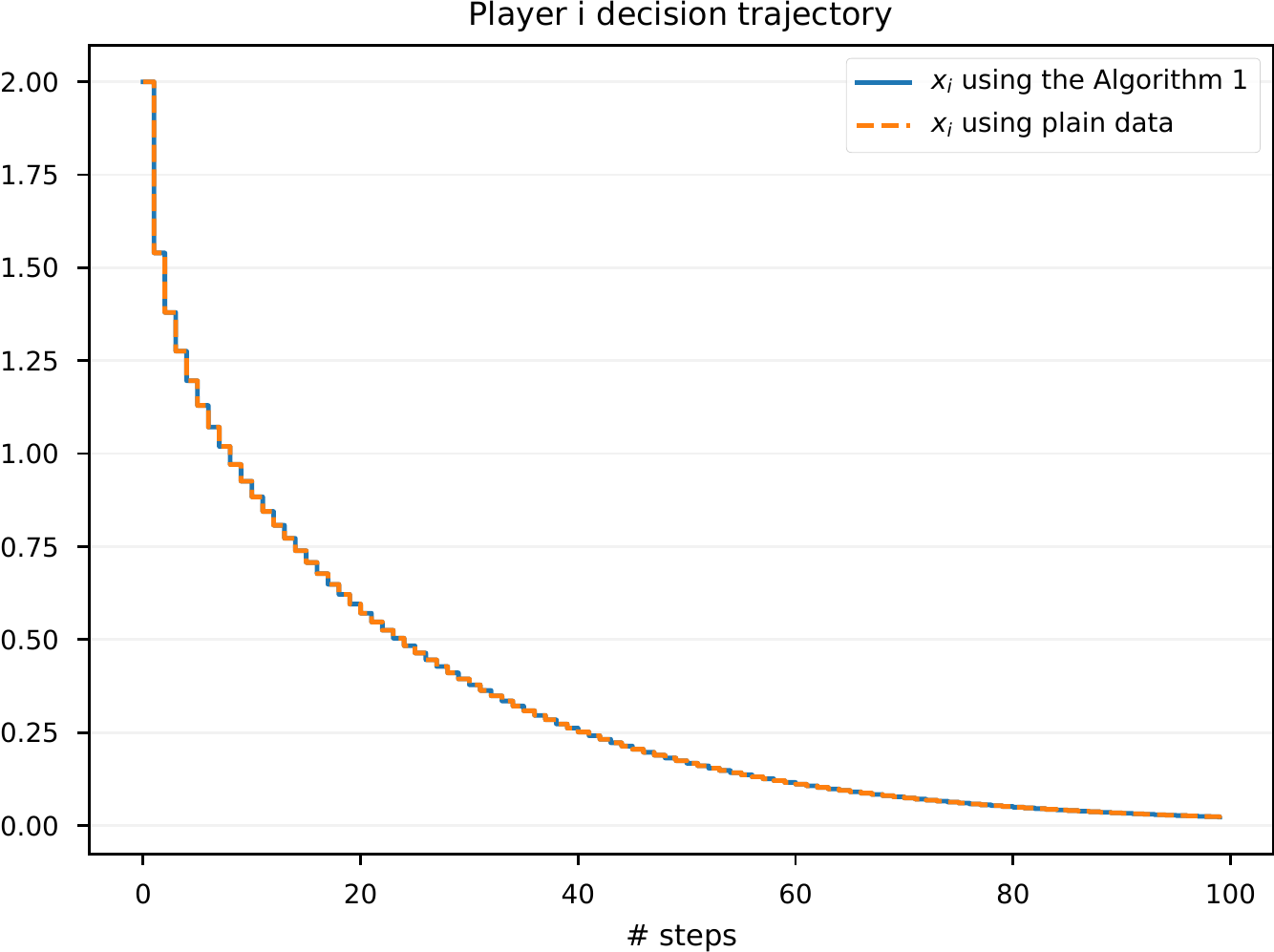}
\caption{Trajectory of player $i$ decision variable using Algorithm \ref{Alg} and plain data}\label{f:Node1}
\end{center}
\end{figure}
 \begin{figure}[H]
\begin{center}
\includegraphics[width=0.4\textwidth]{./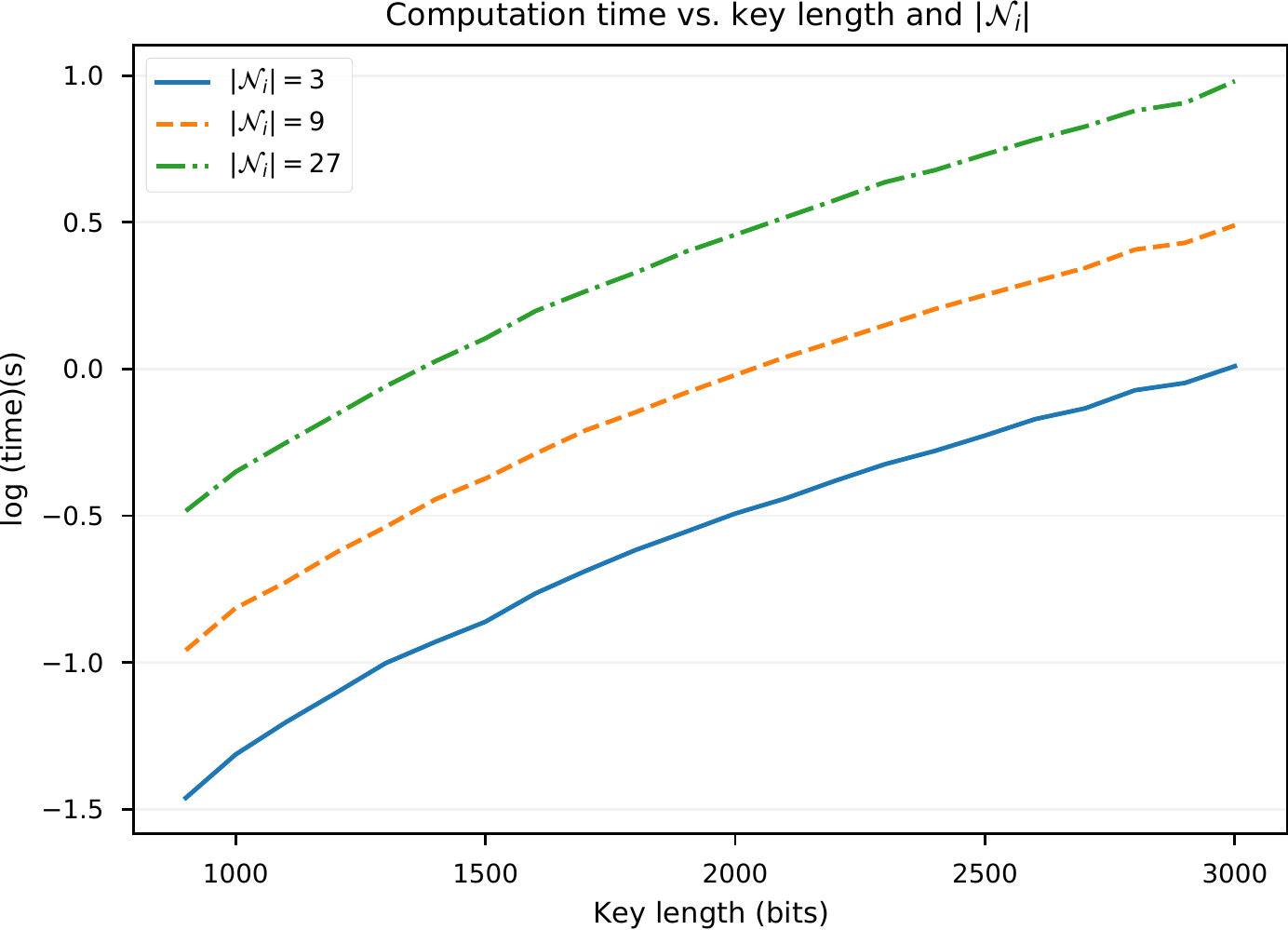}
\caption{Required computation time for the proposed algorithm with respect to the key length of Paillier scheme and number of neighbors of an agent}\label{f:Computation}
\end{center}
\end{figure}
As it is clear from this figure, the computation time increases linearly with respect to the number of neighbors, $\mathcal{O}(|\mathcal{N}_i|)$ and cubically  with respect to key length, \blue{$\calO(\sigma^3)$}.
Communication load is proportional to the size of the generated ciphertext which itself changes linearly in terms of both $|\mathcal{N}_i|$ and $\sigma$. Note that we have not employed any techniques to optimize the computation time. \blue{It is worth mentioning that the quantification results in Table \ref{tab:com} are consistent with the obtained numerical results in Fig. \ref{f:Computation}.}
\section{Conclusion}\label{sec:con}
In this study, we have presented a fully distributed algorithm for privacy preserving evaluation of a general polynomial over a network of agents. 
The algorithm is based on a suitable representation of polynomials for network systems, and adopts PHE technique and multiplicative-additive secret sharing from cryptographic tools. Furthermore, we have provided sufficient privacy-preserving conditions both at the agent and the network level.
As observed, the proposed algorithm is robust against dropout of agents, lightweight in communication and is extendable to a class of nonlinear schemes. The numerical investigations verify that the algorithm can be used to protect privacy in a network subject to additional communication and computation costs. Extensions to more general nonlinear functions and  considering possible active adversaries are among directions for future research.

\bibliography{MyReferences} 
\appendix
\section{}{\label{app:no_collude}
To provide a formal proof,  we present the definitions of view and simulator in a protocol. 

\begin{definition}[View]\cite[p. 283]{ lindell2017tutorials} Let $f(x,y) = (f_1(x,y), f_2(x,y))$ be a function, and let $\pi$ be a two party protocol(or algorithm) for computing $f$. The view of party $i$ ($i \in \{1, 2\}$) during an execution of $\pi$ on $(x, y)$ and security parameter $n$ is denoted by $\View_{i}^{\pi}(x, y, n)$ and equals $(w, r^i; m_1^i,\ldots, m_t^i)$ where $w \in \{x, y\}$, $r^i$ is the random number used by party $i$, and $m_j^i,$ represents the $j$-th message that she received. 
\end{definition}

\begin{definition}[Simulator]\cite[p. 278]{ lindell2017tutorials} Let $f(x,y) = (f_1(x,y), f_2(x,y))$ be a function, and let $\pi$ be a two party protocol for computing $f$. 
A simulator for party $i$ ($i \in \{1, 2\}$) $\Sim_i^{\pi}$ is a probabilistic polynomial-time algorithm which 
given the input and output of $i$ , $(w, f_i(x,y))$ where $w \in \{x, y\}$  can result an output whose distribution is exactly the same as $\View_{i}^{\pi}(x, y, n)$.
\end{definition}
\begin{pfof}{Proposition \ref{prp:no_coll}}
To prove this proposition, we use the simulation based paradigm also known as real/ideal world \cite[Chap. 6]{ lindell2017tutorials}.
For the deterministic function \eqref{e:poly_extend}, the security of the proposed algorithm can be shown by verifying its 1) correctness and 2) privacy. 
The proposed algorithm is correct since the agents are honest-but-curious and hence the correct value of $\calP_i(\cdot)$ is obtained by following the Protocol \ref{Alg}.
To prove privacy of  $\PV_j^{}$ for $j \in \calN_i$ against agent $i$, we need to establish the existence of a simulator $\Sim_{i}^{\pi}$ for $i$.
The input of agent $i$, meaning the information set she commits to the protocol is
$ \big\{ \{c_{p_ip_j}, c_{\qj}\}_{j \in \calN_i}, x_i, \sa_i, \sm_i, {\mathrm{pk}_i}, \ski\big\}:=\calI_{i}$
and the input of all agents involved in Algorithm \ref{Alg} is 
$ \big\{ \{c_{p_ip_j}, c_{\qj}\}_{j \in \calN_i}, \{x_j, \sa_j, \sm_j\}_{j \in \ONI}, {\mathrm{pk}_i}, \ski\big\}:=\calI$.
The {\View} of agent $i$ participating in Algorithm \ref{Alg} given the set $\calI$ is  
$\View_{i}^{\pi}(\calI) = \big\{ \calI_{i}, \{ \sigma_j, \mu_j\}_{j \in (\calN_i\backslash D_{i})} , \sigma_{D_i} + \Psi_{D_i}\big\}$, 
where $\sigma_j$, $\mu_j$ and $\sigma_{D_i} + \Psi_{D_i}$ are values received by agent $i$ in Steps $3-5$ of the proposed algorithm. Given $\calI_{i}$ and the output of the algorithm $\calP_i(\bld{x_i}, \bld{x_{\calN_i}})$ the simulator output is
$\Sim_{i}^{\pi}(\calI_{i}, \calP_i(\x_i, \x_{\calN_i})) =  \big\{ \calI_{i},\{ \hat{\sigma}_j, \hat{\mu}_j\}_{j \in (\calN_i \backslash D_{i})},\hat{\sigma}_{D_i} + \hat{\Psi}_{D_i} \big\}$.
We claim that $\View_{i}^{\pi}(\calI) \overset{c}{\equiv} \Sim_{i}^{\pi}(\calI_{i}, \calP_i(\bld{x_i}, \bld{x_{\calN_i}}))$, that is they are computationally indistinguishable.
This is true since $\Sim_{i}^{\pi}$ can pick the values $\big\{ \{ \hat{\sigma}_j, \hat{\mu}_j\}_{j \in (\calN_i \backslash D_{i})},\hat{\sigma}_{D_i} + \hat{\Psi}_{D_i} \big\}$ uniformly randomly from \eqref{Gal} with the condition that they satisfy the output of the protocol, $\calP_i(\x_i, \x_{\calN_i})$.
The $\Sim_{i}^{\pi}$ can do so since $|\calN_i|\ge 2$ and hence there exists at least two additive shares $\sa_j$ and $\sa_h$ (where $h\in \calN_{i}\backslash j$), and two multiplicative shares $\sm_j$ and $\sm_h$ to enable it to calculate $\hat{\mu}_j$ and $\hat{\sigma}_j$ and $\hat{\sigma}_{D_i} + \hat{\Psi}_{D_i}$ with the same distribution as $\mu_j$, $\sigma_j$ and $\sigma_{D_i} + \Psi_{D_i}$. Therefore, the privacy of $\PV_j^{}$ for $j \in \calN_i$ is preserved by Algorithm \ref{Alg}.
Moreover, agent $j \in \calN_i$  only receives as a private value $ m_1^i = \ENC{c_{p_ip_j}\bld{x}_{\bld{i}}^{\p}} $, $m_2^i =  \ENC{c_{{\qj}}} $($j \ne D_i$) and $m_3^i =  \ENC{c_{\qj}\Psi_{i}}  $($j = D_i$) from agent $i$(Step $2$ and $4$ of Algorithm \ref{Alg})
which are encrypted values by Paillier's scheme. Since this scheme is semantically secure and agent $j$ does not have the secret key $\ski$, agent $j$'s view is computationally indistinguishable from random numbers $\hat{m}_1^i, \hat{m}_2^i, \hat{m}_3^i \in \mathbb{Z}_{N^2}^{*}$. 
Therefore, the privacy of $\PV_i^{}$ is preserved by Algorithm \ref{Alg}.
\end{pfof}}

\begin{pfof}{Theorem \ref{the1}}
Correctness of Algorithm \ref{Alg} is similarly proved as of Proposition \ref{prp:no_coll}. 
Given the agent $i$, we need to prove the privacy of $\PV_j^{}$ for $j \in \big( \calN_i\cap \calV_{nc}\big) := \calV^{i}_{nc}$ against $\big( \ONI \cap \calV_{c}\big) := \calV^{i}_{c}$ and for that we need to establish the existence of a simulator $\Sim_{\calV^{i}_{c}}^{\pi}$. We consider the worst case scenario, i.e. $|\calV^{i}_{nc}| = 2$, meaning there are only 2 noncorrupt agents among the neighbors of agent $i$. Suppose that $\calV^{i}_{nc} = \{h, D_i\}$ where $h \ne D_i$. 
The input of colluding agents $\calV^{i}_{c}$ is $ \big\{ \{c_{p_ip_j}, c_{\qj}\}_{j \in \calN_i}, \{x_j, \sa_j, \sm_j\}_{j \in \calV^{i}_{c}}, {\mathrm{pk}_i}, \ski\big\}:=\calI_{\calV^{i}_{c}}$
and the input of parties involved in Algorithm \ref{Alg} is 
$ \big\{ \{c_{p_ip_j}, c_{{\qj}}\}_{j \in \calN_i}, \{x_j, \sa_j, \sm_j\}_{j \in \ONI}, {\mathrm{pk}_i}, \ski\big\}:=\calI$.
The {\View} of $\calV^{i}_{c}$ participating in the proposed algorithm given the set $\calI$ is $
\View_{\calV^{i}_{c}}^{\pi}(\calI) = \big\{ \calI_{\calV^{i}_{c}},  \sigma_h, \mu_h,  \sigma_{D_i} + \Psi_{D_i}\big\} $ where $\sigma_h$, $\mu_h$ and $\sigma_{D_i} + \Psi_{D_i}$ are values received by the set $\calV^{i}_{c}$ in Steps $3-5$ of the proposed algorithm.
 The simulator output is $\Sim_{\calV^{i}_{c}}^{\pi}(\calI_{\calV^{i}_{c}}, \calP_i(\bld{x_i}, \bld{x_{\calN_i}})) =  \{ \calI_{\calV^{i}_{c}}, \hat{\sigma}_h, \hat{\mu}_h,\hat{\sigma}_{D_i} + \hat{\Psi}_{D_i} \}$, given $\calI_{\calV^{i}_{c}}$ and the output of the algorithm.
The claim is $\View_{\calV^{i}_{c}}^{\pi}(\calI) \overset{c}{\equiv} \Sim_{\calV^{i}_{c}}^{\pi}(\calI_{\calV^{i}_{c}}, \calP_i(\bld{x_i}, \bld{x_{\calN_i}}))$, they are computationally indistinguishable. 
To see this, the simulator uses $\calI_{\calV^{i}_{c}}$ and $\calP_i(\x_i, \x_{\calN_i})$ to have the evaluation of $\calP_i(\bld{x_i}, \bld{x_h}, \bld{x_{D_i}}) = P_{h}(\bld{x_i}, \bld{x_h}) + P_{D_i}(\bld{x_i}, \bld{x_{D_i}}) + \xi W_h(\bld{x_h})W_{D_i}(\bld{x_{D_i}})$, where $\xi := \prod_{j \in \calV^{i}_{c}}W_j(\bld{x_j})$ is also known to the simulator.
 Then, the $\Sim_{\calV^{i}_{c}}^{\pi}$  picks $\hat{\sa}_h$, $\hat{\sa}_{D_i}$, $\hat{\sm}_h$, and $\hat{\sm}_{D_i}$ randomly from \eqref{Gal} such that \eqref{e:sec_add} and \eqref{e:sec_mul} hold. Next, it selects randomly $\bld{\hat{x}_h}$ and $\bld{\hat{x}_{D_i}}$ from \eqref{Gal} such that $\calP_i(\bld{x_i}, \bld{x_h}, \bld{x_{D_i}})$ holds. 
 Finally, the simulator outputs $\hat{\sigma}_h = P_{h}(\bld{x_i}, \bld{\hat{x}_h}) + \hat{\sa}_h$ and $\hat{\mu}_h = \hat{\sm}_h W_h(\bld{\hat{x}_h})$ for agent $h$, 
 and $\hat{\sigma}_{D_i} + \hat{\Psi}_{D_i} = P_{D_i}(\bld{x_i}, \bld{\hat{x}_{D_i}}) + \xi (\hat{\mu}_h)(\hat{\sm}_{D_i} W_h(\bld{\hat{x}_{D_i}}))+ \hat{\sa}_{D_i}$ for agent $D_i$. The set $\calV_c$ cannot differentiate between $\bld{x_h}$ and $\bld{\hat{x}_h}$ for agent $h$, and $\bld{x_{D_i}}$ and $\bld{\hat{x}_{D_i}}$ for agent $D_i$ since $\sigma_h$, $\mu_h$,  $\sigma_{D_i} + \Psi_{D_i}$ have the same distribution as $\hat{\sigma}_h$, $\hat{\mu}_h$, $\hat{\sigma}_{D_i} + \hat{\Psi}_{D_i}$. Therefore, the privacy of $\PV_j^{}$ for $j \in \calV^{i}_{nc}$ against $\calV^{i}_{c}$ is preserved.
\end{pfof}

\end{document}